\definecolor{light}{gray}{.9}
\newcommand{\beq}{\begin{equation}}
\newcommand{\eeq}{\end{equation}}
\newcommand{\ie}{\hbox{\it i.e.\ }}
\newlength\fullwidth
\numberwithin{equation}{section}
\DeclareMathSymbol{\leqslant}{\mathalpha}{AMSa}{"36} 
\DeclareMathSymbol{\geqslant}{\mathalpha}{AMSa}{"3E} 
\DeclareMathSymbol{\eset}{\mathalpha}{AMSb}{"3F}     
\renewcommand{\leq}{\;\leqslant\;}                   
\renewcommand{\geq}{\;\geqslant\;}                   
\newcommand{\1}{\mathds{1}}
\renewcommand{\l}{\lambda}
\renewcommand{\L}{\Lambda}
\renewcommand{\l}{\lambda}
\renewcommand{\a}{\alpha}
\newcommand{\z}{\zeta}
\renewcommand{\O}{\Omega}
\newtheorem{theorem}{Theorem}[section]
\newtheorem{lemma}[theorem]{Lemma}
\newtheorem{proposition}[theorem]{Proposition}
\newtheorem{claim}[theorem]{Claim}
\newtheorem{definition}[theorem]{Definition}
\newcommand{\N}{\mathbb N}
\newcommand{\bbN}{{\ensuremath{\mathbb N}} }
\newcommand{\bbP}{{\ensuremath{\mathbb P}} }
\newcommand{\bbR}{{\ensuremath{\mathbb R}} }
\let\a=\alpha      
  \let\h=\eta      \let\l=\lambda
          \let\p=\pi  
 \let\x=\xi \let\z=\zeta
   \let\G=\Gamma  \let\L=\Lambda 
\let\O=\Omega
\begin{document}
\title{Monotonicity and Condensation in Homogeneous Stochastic Particle Systems}
\author{Thomas Rafferty}
\address{T.Rafferty. Centre for Complexity Science, University of Warwick, Coventry, CV4 7AL, UK}
\email{t.rafferty@warwick.ac.uk}
\author{Paul Chleboun}
\address{P.Chleboun. Mathematics Institute and Centre for Complexity Science, University of Warwick, Coventry, CV4 7AL, UK}
\email{p.i.chleboun@warwick.ac.uk}
\author{Stefan Grosskinsky}
\address{S.Grosskinsky. Mathematics Institute and Centre for Complexity Science, University of Warwick, Coventry, CV4 7AL, UK}
\email{s.w.grosskinsky@warwick.ac.uk}
\date{\today}

\begin{abstract}
We study stochastic particle systems that conserve the particle density and exhibit a condensation transition due to particle interactions. We restrict our analysis to spatially homogeneous systems on finite lattices with stationary product measures, which includes previously studied zero-range or misanthrope processes. All known examples of such condensing processes are non-monotone, i.e. the dynamics do not preserve a partial ordering of the state space and the canonical measures (with a fixed number of particles) are not monotonically ordered. For our main result we prove that condensing homogeneous particle systems with finite critical density are necessarily non-monotone. On finite lattices condensation can occur even when the critical density is infinite, in this case we give an example of a condensing process that numerical evidence suggests is monotone, and give a partial proof of its monotonicity.
\end{abstract}
\maketitle

\section{Introduction}

We consider stochastic particle systems which are probabilistic models describing transport of a conserved quantity on discrete geometries or lattices. 
Many well known examples are introduced in \cite{spitzer70}, including zero-range processes and exclusion processes, which are both special cases of the more general family of misanthrope processes introduced in \cite{misanthrope}. 
We focus on spatially homogeneous models with stationary product measures and without exclusion restriction, which can exhibit a condensation transition that has recently been studied intensively.

A condensation transition occurs when the particle density exceeds a critical value and the system phase separates into a fluid phase and a condensate. The fluid phase is distributed according to the maximal invariant measure at the critical density, and the excess mass concentrates on a single lattice site, called the condensate. Most results on condensation so far focus on zero-range or more general misanthrope processes in thermodynamic limits, where the lattice size and the number of particles diverge simultaneously. Initial results are contained in \cite{drouffe98,Jeon2000,evans00}, and for summaries of recent progress in the probability and theoretical physics literature see e.g.\ \cite{Godreche2012,Chleboun2013a,Evans2014}. Condensation has also been shown to occur for processes on finite lattices in the limit of infinite density, where the tails of the single site marginals of the stationary product measures behave like a power law \cite{ferrarietal07}. In general, condensation results from a heavy tail of the maximal invariant measure \cite{stefan}, and so far most studies focus on power law and stretched exponential tails \cite{armendarizetal09,agl}. As a first result, we generalize the work in \cite{ferrarietal07} and provide a characterization of condensation on finite lattices in terms of the class of sub-exponential tails that has been well studied in the probabilistic literature \cite{CharlesM.Goldie,Teugels1975,Pitman1980,J.Chover1973}. This characterization holds for a particular definition of condensation given in Section \ref{sub sec: ips}, which was also used in \cite{ferrarietal07}. Our main result is that all spatially homogeneous processes with stationary product measures that exhibit condensation on finite lattices with a finite critical density are necessarily non-monotone.

Monotone (attractive) particle systems preserve the partial order on the state space in time, which enables the use of powerful coupling techniques to derive rigorous results on large scale dynamic properties such as hydrodynamic limits (see \cite{saada} and references therein). These techniques have also been used to study the dynamics of condensation in attractive zero-range processes with spatially inhomogeneous rates \cite{krugetal96,landim96,benjaminietal96,andjeletal00,ferrarisisko}, and more recently \cite{Bahadoran2014,Bahadoran2015}. As we discuss in Appendix \ref{sec:statmech}, non-monotonicity in homogeneous systems with finite critical density can be related, on a heuristic level, to convexity properties of the canonical entropy. 
For condensing systems with zero-range dynamics, it has been shown that this is related
to the presence of metastable states, resulting in the non-monotone behaviour of the canonical stationary current/diffusivity \cite{chlebounetal10}. This corresponds to a first order correction of a hydrodynamic limit leading to an ill-posed equation with negative diffusivity in the case of reversible dynamics. Heuristically, this is of course consistent with the concentration of mass in a small, vanishing volume fraction, but poses great technical difficulties to any rigorous proof of hydrodynamic limits for such particle systems. First results in this direction only hold for sub-critical systems under restrictive conditions \cite{Stamatakis2014}, and due to a lack of monotonicity there are no results for non-reversible dynamics.


Condensing monotone particle systems would, therefore, provide interesting examples of homogeneous systems for which coupling techniques could be used to derive stronger results on hydrodynamic limits. However, our result implies that this is not possible for condensing models with stationary product measures and a finite critical density on finite lattices. In the thermodynamic limit condensation has been defined through the equivalence of ensembles, which can be established in generality for a class of long-tailed distributions with a finite critical density \cite{stefan,Chleboun2013a}. This class has also been studied before \cite{Teugels1975,Pitman1980} and includes the class of sub-exponential distributions, for which our results apply also in the thermodynamic limit. A detailed discussion of their connections and the resulting differences between condensation on finite lattices and in the thermodynamic limit is given in Sections \ref{sec:subexp} and \ref{tlimit}. We remark that for systems where the dynamics is directly defined on infinite lattices there are no rigorous results or characterizations of condensation to our knowledge, and we do not discuss this case here.


For systems with infinite critical density condensation can still occur on finite lattices, and since non-monotonicity typically occurs above the critical density, such processes can also be monotone. 
When the tail of the stationary measure is a power law and decays faster than $n^{-3/2}$ with the occupation number $n$, we prove that the process is still non-monotone.
 In Section \ref{examples} we present preliminary results for tails that decay slower than $n^{-3/2}$, which strongly suggests that a monotone and condensing particle system exists (see \cite{Fajfrova2014} for further discussion).

The paper is organised as follows. In Section \ref{sec:notation} we introduce the background used to study condensation and monotonicity in particle systems, and state our main results. In Section \ref{sec:proof} we prove our main theorem by induction over the size of the lattice, showing that the family of canonical stationary measures is necessarily not monotonically ordered in the number of particles. 
In Section \ref{sec: char cond} we discuss the differences between condensation on fixed lattices and in the thermodynamic limit, and prove equivalence of condensation on finite lattices with the tail of the maximal invariant product measure being sub-exponential. 
In Section \ref{examples} we review examples of homogeneous processes that have been shown to exhibit condensation, and present some explicit computations for misanthrope processes and processes with power law tails.\\

\section{Notation and results}\label{sec:notation}

\subsection{Condensing stochastic particle systems}
\label{sub sec: ips}
We consider stochastic particle systems on fixed finite lattices $\Lambda =\{ 1,\ldots ,L\}$, which are continuous-time Markov chains on the countable state space $\Omega_{L} = \mathbb{N}^{\L}$. For a given configuration $\eta = (\eta_{x}:x \in \Lambda) \in \Omega_{L}$ the local occupation, $\h_x$ for $x \in \Lambda$, is a priori unbounded. The jump rates from configuration $\eta \in \Omega_{L}$ to $\xi \in \Omega_{L}$ are denoted $c(\eta, \xi)\geq 0$, and the dynamics of the process is defined by the generator
\begin{align}\label{gene}
\mathcal{L}f(\eta) = \sum_{\{\xi \in \Omega_{L} : \xi \neq \eta \} }{c(\eta,\xi)\big(f(\xi)-f(\eta)\big)} \ ,
\end{align}
for all continuous functions $f:\Omega_{L} \to \bbR$. 
We assume the process conserves the total number of particles
\begin{align*}
S_L (\eta ):=\sum_{x\in\Lambda} \eta_x \ ,
\end{align*}
and conditioned on $S_L =N$, the process is assumed to be irreducible, so that $S_L$ is the only conserved quantity. The process is therefore ergodic on the finite state space $\Omega_{L,N} = \{ \eta \in \Omega_{L} : S_L (\eta ) = N\}$ for all fixed $N\geq 0$. On $\Omega_{L,N}$ the process has a unique stationary distribution $\pi_{L,N}$, and the family $\big\{\pi_{L,N}:N\geq 0 \big\}$ is called the canonical ensemble. 

We focus on systems for which the stationary distributions are spatially homogeneous, \ie the marginal distributions $\pi_{L,N} [\eta_x \in .]$ are identical for all $x\in\Lambda$. This typically results from translation invariant dynamics on translation invariant lattices with periodic boundary conditions, but the actual details of the dynamics are not needed for our results. For these systems we define condensation in terms of the maximum occupation number
\begin{equation}
M_L (\eta ):= \max_{x\in\Lambda} \eta_x \ .
\label{maximum}
\end{equation}

\begin{definition}\label{condensation}
A stochastic particle system with canonical measures $\pi_{L,N}$ on $\Omega_{L,N}$ with $L\geq 2$ exhibits \textbf{condensation} if 
\begin{equation}
\lim_{K\to \infty}\lim_{N \to \infty}\pi_{L,N}[M_L \geq N-K] = 1 \ .
\label{eqn:cond1}
\end{equation}
\end{definition}

This condition implies in particular the existence of all limits involved. The interpretation of (\ref{eqn:cond1}) is that due to the sub-exponential tails of the measure, in the limit $N\to\infty$ all but finitely many particles concentrate in a single lattice site and that the distribution of particles outside the maximum is non-degenerate. As we will see in Proposition \ref{prop1} below, the latter is in fact given by the maximal invariant measure when the system exhibits product stationary measures.

There are of course other possible definitions of condensation which are less restrictive or more appropriate in other situations. For inhomogeneous systems or thermodynamic limits (with $N,L\to\infty$) the condensed phase can be localized in particular sites or have a more complicated spatial structure (see e.g. \cite{Waclaw2007} and for recent summaries \cite{Chleboun2013a,Godreche2012,Evans2014}). For our case of spatially homogeneous systems on finite lattices, $\Lambda$ with fixed $L$, the above is the most convenient definition and has been used in previously studied examples \cite{ferrarietal07}. A more detailed discussion is provided in Section \ref{sec: char cond}.\\


\subsection{Monotonicity and product measures}
\label{sec:mon}
We use the natural partial order on the state space $\Omega_{L}$ given by $\eta \leq \zeta$ if and only if $\eta_{x} \leq \zeta_{x}$ for all $x \in \Lambda$.  A function $f: \Omega_{L}\to \mathbb{R}$ is said to be increasing if $\eta \leq \zeta $ implies $ f(\eta) \leq f(\zeta)$. Two measures $\mu,\mu^{\prime}$ on $\Omega_{L}$ are stochastically ordered with $\mu \leq \mu^{\prime}$, if for all increasing functions $f: \Omega_{L}\to \mathbb{R}$ we have $\mu(f) \leq \mu^{\prime}(f)$, where $\mu(f)$ denotes the expectation of $f$ with respect to $\mu$.

A stochastic particle system on $\Omega_{L}$ with generator $\mathcal{L}$ and semi-group $(S(t)=e^{t\mathcal{L}} : t\geq 0)$ is called monotone (attractive) if it preserves stochastic order in time, i.e.
\begin{align*}
\mu \leq \mu^{\prime} \quad \implies \quad \mu S(t) \leq \mu^{\prime} S(t) \quad \textrm{for all } t \geq 0 \textrm{.}
\end{align*}

Coupling techniques for monotone processes are an important tool to derive rigorous results on the large scale dynamics of such systems such as hydrodynamic limits. There are sufficient conditions on the jump rates (\ref{gene}) to ensure monotonicity for a large class of processes (see e.g. \cite{saada} for more details), however for our results we only need a simple consequence for the stationary measures of the process.

\begin{lemma}\label{order}
If the stochastic particle system as defined in Section \ref{sub sec: ips} is monotone, then the canonical distributions $\pi_{L,N}$ are ordered in $N$, i.e.
\begin{equation}
\pi_{L,N} \leq \pi_{L,N+1} \quad\mbox{for all }N\geq 0\ .
\label{eq:order}
\end{equation}
\end{lemma}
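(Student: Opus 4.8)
The plan is to obtain the ordering of the canonical measures as the $t\to\infty$ limit of an ordering between two evolving point masses that differ by a single particle, using only the abstract monotonicity property (applied to Dirac masses) rather than an explicit pathwise coupling. Fix $N\ge 0$, choose any $\eta\in\Omega_{L,N}$ and any site $x\in\Lambda$, and let $\zeta\in\Omega_{L,N+1}$ be the configuration with $\zeta_x=\eta_x+1$ and $\zeta_y=\eta_y$ for $y\ne x$. Then $\eta\le\zeta$ in the partial order on $\Omega_L$, so $f(\eta)\le f(\zeta)$ for every increasing $f$, i.e.\ the Dirac masses satisfy $\delta_\eta\le\delta_\zeta$ in the stochastic order. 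Since the process is monotone, this yields $\delta_\eta S(t)\le\delta_\zeta S(t)$ for all $t\ge 0$. Because the dynamics conserves $S_L$, the measure $\delta_\eta S(t)$ is supported on $\Omega_{L,N}$ and $\delta_\zeta S(t)$ on $\Omega_{L,N+1}$ for every $t$, and both of these sets are finite.

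The second ingredient is ergodicity on each slice. On $\Omega_{L,N}$ the process is, by the standing assumption that $S_L$ is the only conserved quantity, an irreducible continuous-time Markov chain on a finite state space, hence it converges to its unique stationary distribution; thus $\delta_\eta S(t)\to\pi_{L,N}$ and likewise $\delta_\zeta S(t)\to\pi_{L,N+1}$ as $t\to\infty$. Since the approximating measures and the limit measures are supported on the same finite sets, we get $\delta_\eta S(t)(f)\to\pi_{L,N}(f)$ and $\delta_\zeta S(t)(f)\to\pi_{L,N+1}(f)$ for every $f:\Omega_L\to\mathbb R$ (each is a finite sum). Passing to the limit in $\delta_\eta S(t)(f)\le\delta_\zeta S(t)(f)$ for an arbitrary increasing $f$ then gives $\pi_{L,N}(f)\le\pi_{L,N+1}(f)$, that is $\pi_{L,N}\le\pi_{L,N+1}$; since $N\ge 0$ was arbitrary, this is the claim.

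There is no substantial obstacle: this is a routine limiting argument. The two points worth spelling out are that the conservation law confines each evolution to the finite slice $\Omega_{L,N}$, which is exactly what makes both the convergence $\delta_\eta S(t)\to\pi_{L,N}$ and the exchange of the limit with the expectation legitimate (so that stochastic order is preserved under the limit, even against unbounded increasing test functions, as increasing functions on $\Omega_L$ are bounded below by their value at the empty configuration), and that only the consequence of monotonicity for Dirac masses is needed, so no verification of sufficient conditions on the rates in \eqref{gene} is required here.
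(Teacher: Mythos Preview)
Your proof is correct and follows essentially the same approach as the paper's: start from two stochastically ordered Dirac masses on $\Omega_{L,N}$ and $\Omega_{L,N+1}$, propagate the order via monotonicity of the semigroup, and pass to the $t\to\infty$ limit using ergodicity on the finite slices. The only cosmetic difference is that the paper picks the specific ordered pair with all particles on site~$1$, whereas you take an arbitrary $\eta$ and add a particle; your additional remarks on why the limit exchange is legitimate are a welcome clarification that the paper leaves implicit.
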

The proof is completely standard but short, so we include it for completeness.

\begin{proof}
Fix a monotone process as defined in Section~\ref{sub sec: ips}. Consider two initial distributions $\mu$ and $\mu'$, concentrating on $\Omega_{L,N}$ and $\Omega_{L,N+1}$ respectively, given by
\begin{equation}
\nonumber
\mu[\eta] = \1(\eta_{1} = N) \quad \textrm{and}\quad \mu'[\xi] = \1(\xi_1 = N+1)\ ,
\end{equation}
for $\h \in \Omega_{L,N}$ and $\x \in \Omega_{L,N+1}$. Clearly $\mu \leq \mu'$, and so by monotonicity of the process this implies $\mu S(t) \leq \mu^{\prime} S(t)$ for all $t \geq 0$. Furthermore, by ergodicity we have
\begin{equation}
\nonumber
\pi_{L,N} =\lim_{t\to\infty} \mu S(t) \leq \lim_{t\to\infty} \mu' S(t)=\pi_{L,N+1} \ .
\end{equation}
\end{proof}

All rigorous results on condensing particle systems so far have been achieved for processes which exhibit stationary product measures, for which the measures $\pi_{L,N}$ take a simple factorized form. These can then be expressed in terms of un-normalized single-site weights $w(n)> 0$, $n\in\bbN$. Due to conservation of $S_L$ such processes exhibit a whole family of stationary homogeneous product measures
\begin{equation}
\nu^{L}_\phi [\eta ]=\prod_{x\in\Lambda} \nu_\phi [\eta_x ]\quad\mbox{with marginals}\quad \nu_\phi [\eta_x ]=\frac{w(\eta_x )}{z(\phi )}\,\phi^{\eta_x}\ .
\label{eq:pm}
\end{equation}
The measures are defined whenever the normalization
\begin{align}
z(\phi) := \sum_{n=0}^{\infty}{\phi^n w(n) }
\label{zphi}
\end{align}
is finite. This is the case for all fugacity parameters $\phi\in D_\phi$ where $D_\phi =[0,\phi_c )$ or $[0,\phi_c ]$, and
\begin{equation}\label{phic}
\phi_c :=\big(\limsup_{n\to\infty} \sqrt[n]{w(n)}\big)^{-1}
\end{equation}
is the radius of convergence of (\ref{zphi}). The family $\big\{ \nu_\phi :\phi\in D_\phi \big\}$ is also called the grand-canonical ensemble and $z(\phi )$ the (grand-canonical) partition function. Since the process is irreducible on $\Omega_{L,N}$ for all $N \in \bbN$ we have $w(n)>0$ for all $n \geq 0$. The canonical distribution can be written as
\begin{equation}
\pi_{L,N} [\eta ]=\nu^{L}_\phi [\eta |S_L =N]\quad\mbox{for all }\phi\in D_\phi \ ,
\end{equation}
which is independent of the choice of $\phi$. Equivalently
\begin{align}
\pi_{L,N}[\eta] = \frac{1}{Z_{L,N}} \prod_{x\in\Lambda} w(\eta_{x})\quad\mbox{where}\quad Z_{L,N} =\sum_{\eta\in \Omega_{L,N}} \prod_{x\in\Lambda} w(\eta_{x})
\label{eqn: prod measure}
\end{align}
is the (canonical) partition function. Note that throughout the paper we characterize all measures by their mass functions since we work only on a countable state space $\Omega_{L}$ and the measures $\pi_{L,N}$ concentrate on finite state spaces $\Omega_{L,N}$, which is the framework we rely on in this paper.\\

\subsection{Results}
\label{sec:resu}

Our results hold for systems with general stationary weights, $w(n)>0$ for each $n \in \bbN$, subject to the regularity assumption that
\begin{equation}
\lim_{n\to\infty} w(n-1)/w(n)\in (0,\infty ]
\label{regu}
\end{equation}
exists, which is then necessarily equal to $\phi_c$. 
If $\phi_{c}< \infty$ then weights that satisfy \eqref{regu} are sometimes called long-tailed \cite{foss2011introduction}, which is discussed in more detail in Section \ref{tlimit}.


For processes with such stationary product measures there is a simple equivalent characterization of condensation which we prove in Section \ref{proof equiv}.

\begin{proposition}\label{prop1}
Consider a stochastic particle system as defined in Section \ref{sub sec: ips} with stationary product measures as defined in Section \ref{sec:mon} satisfying regularity assumption \eqref{regu}. Then the process exhibits condensation according to Definition \ref{condensation} if and only if $\phi_c <\infty$, $D_\phi =[0,\phi_c ]$, and
\begin{equation}
\lim_{N\to\infty} \frac{\nu_{\phi_c}^{2} [\eta_1 +\eta_2 =N]}{\nu_{\phi_c} [\eta_1 =N]} =\lim_{N\to\infty} \frac{Z_{2,N}}{w(N)z(\phi_c )}\in (0,\infty )\quad\mbox{exists}\ .
\label{paff}
\end{equation}
In this case, the distribution of particles outside of the maximum converges weakly (equivalently in total variation) to the critical product measure $\nu_{\phi_c}^{L-1}$, \textit{i.e.} for fixed $n_1 , \ldots , n_{L-1}$ we have
\begin{align}
\label{weaklim}
\pi_{L,N}[\eta_{1} = n_{1},\ldots ,\eta_{L-1}=n_{L-1} | M_{L} = \eta_{L}] \to \prod_{i=1}^{L-1}\nu_{\phi_{c}}[\eta_{i} = n_i] \quad \textrm{as}\quad N \to \infty \ .
\end{align}
\end{proposition}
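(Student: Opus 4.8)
The plan is to reduce the statement about the maximum occupation number to a statement about product weights conditioned on their sum, which is then controlled by the regularity assumption \eqref{regu} and the subexponential-type estimate \eqref{paff}. First I would set up the basic identity: by \eqref{eqn: prod measure}, for any configuration with a distinguished heavy site $\eta_L$,
\begin{align*}
\pi_{L,N}[\eta_1 = n_1,\ldots,\eta_{L-1}=n_{L-1}, \eta_L = N - s]
= \frac{1}{Z_{L,N}}\, w(N-s)\prod_{i=1}^{L-1} w(n_i)\,,
\end{align*}
where $s = n_1 + \cdots + n_{L-1}$. Summing this over all configurations with $M_L = \eta_L$ (i.e. $\eta_L \geq \eta_x$ for $x < L$, which for large $N$ just means $\eta_L$ is the unique large coordinate) gives an expression for $\pi_{L,N}[M_L \geq N-K]$ in terms of partial sums of $Z_{L-1,s} w(N-s)$. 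The first substantive step is therefore an asymptotic evaluation of $Z_{L,N}$: I would show by induction on $L$, using the convolution identity $Z_{L,N} = \sum_{s=0}^{N} w(s) Z_{L-1,N-s}$ together with \eqref{regu} and \eqref{paff}, that
\begin{align*}
\lim_{N\to\infty} \frac{Z_{L,N}}{w(N)\, z(\phi_c)^{L-1}} =: c_L \in (0,\infty)\ \text{exists for all } L\geq 1\,,
\end{align*}
with $c_1 = 1$ and $c_2$ the limit in \eqref{paff}. The induction step amounts to a dominated-convergence argument in the sum $\sum_s \frac{Z_{L-1,s}}{w(s)z(\phi_c)^{L-2}} \cdot \frac{w(s)w(N-s)}{w(N)}$, splitting into $s$ small (where $w(s)w(N-s)/w(N) \to \phi_c^{s}$ by \eqref{regu}), $N-s$ small (symmetric), and the bulk $s,N-s$ both large (which is negligible because that regime is exactly the complement of the condensation event at level $L=2$, controlled by Proposition~\ref{prop1} and \eqref{paff}).

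Given the asymptotics of $Z_{L,N}$, the conditional probability in \eqref{weaklim} is
\begin{align*}
\pi_{L,N}[\eta_1 = n_1,\ldots,\eta_{L-1}=n_{L-1}\mid M_L = \eta_L]
= \frac{w(N-s)\prod_{i=1}^{L-1} w(n_i)\,/\,\text{(correction terms)}}{\sum_{\text{configs}} w(N-\cdot)\prod w(\cdot)}\,,
\end{align*}
and here I would use \eqref{regu} to replace $w(N-s)/w(N) \to \phi_c^{s}$ for fixed $s$, so that the numerator behaves like $w(N)\phi_c^{s}\prod_{i} w(n_i) = w(N)\prod_i w(n_i)\phi_c^{n_i}$; dividing by the corresponding normalization (which, summed over all fixed $(n_1,\ldots,n_{L-1})$, converges to $w(N) z(\phi_c)^{L-1}$ by the $Z$-asymptotics) yields exactly $\prod_{i=1}^{L-1} \nu_{\phi_c}[\eta_i = n_i]$. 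One point requiring care is the event $\{M_L = \eta_L\}$ versus $\{\eta_L \text{ is large}\}$: I would argue that conditioning on $M_L = \eta_L$ versus simply not conditioning on which site carries the condensate changes probabilities only by a factor $1 + o(1)$, since by condensation the probability of two or more large sites vanishes and by symmetry each of the $L$ sites is equally likely to be the maximum. Finally, weak convergence on $\bbN^{L-1}$ of these mass functions to the mass function of $\nu_{\phi_c}^{L-1}$ upgrades automatically to total variation convergence because the state space is countable and the limit is a probability measure (Scheffé's lemma), giving the parenthetical equivalence claimed in the statement.

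The main obstacle I anticipate is the uniform control needed to pass the limit inside the sums — both in the inductive computation of $c_L$ and in the normalization of the conditional law. The delicate regime is where the mass $N-s$ on the distinguished site is large but $s = \sum_{i<L} n_i$ is also large (of order $N$); one must rule out that a non-negligible fraction of the canonical mass sits on configurations with two macroscopically occupied sites. For $L=2$ this is precisely the content of Definition~\ref{condensation} and \eqref{paff}, and for general $L$ it follows by a union bound over pairs of sites combined with the $L=2$ estimate, but writing this cleanly — i.e. producing an integrable dominating function for the ratios $w(s)w(N-s)/w(N)$ uniformly in $N$ — is where the real work lies and is exactly where the long-tailed assumption \eqref{regu} is used (it guarantees $w(n-1)/w(n)$ is eventually bounded, hence $w(N-s)/w(N) \leq C\,\rho^{s}$ for suitable $\rho < \infty$ when $\phi_c < \infty$, or a comparable bound via the $Z_{2,N}$ asymptotics in general).
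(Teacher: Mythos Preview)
Your proposal addresses only one direction of the biconditional. The proposition asserts that condensation holds \emph{if and only if} $\phi_c < \infty$, $D_\phi = [0,\phi_c]$, and the limit \eqref{paff} exists in $(0,\infty)$. Your entire plan treats \eqref{paff}, $\phi_c < \infty$, and $z(\phi_c) < \infty$ as standing hypotheses and works towards establishing condensation and the weak limit \eqref{weaklim}. You never touch the forward direction: assuming only that condensation in the sense of Definition~\ref{condensation} holds (together with regularity \eqref{regu}), one must \emph{deduce} that $\phi_c$ is finite, that $z(\phi_c) < \infty$, and that $Z_{2,N}/w(N)$ converges. The paper does this in four steps: (1) rule out $\phi_c = \infty$ by showing it would force $\pi_{L,N}[M_L \geq N-K] \to 0$; (2) from the assumed existence of $a_K = \lim_N \pi_{L,N}[M_L \geq N-K] > 0$ and the explicit formula for this probability, read off convergence of $Z_{L,N}/w(N)$; (3) from $a_K \to 1$ as $K\to\infty$ conclude $\sum_n Z_{L-1,n}\phi_c^n = z(\phi_c)^{L-1} < \infty$; (4) invoke a result of Embrechts--Goldie \cite{embrechts1982convolution} to pass from convergence of $Z_{L,N}/w(N)$ for the given $L$ down to $L=2$, which is \eqref{paff}. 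None of this appears in your sketch.

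For the direction you do treat (conditions $\Rightarrow$ condensation and \eqref{weaklim}), your outline is broadly in line with the paper, which simply cites Chover--Ney--Wainger (Proposition~\ref{2tol}) for the fact that \eqref{paff} forces $Z_{L,N}/w(N) \to L z(\phi_c)^{L-1}$ for every $L$, and then computes $\pi_{L,N}[M_L \geq N-K]$ and the conditional law directly. Your inductive dominated-convergence plan for the constants $c_L$ is essentially a re-derivation of that cited result, but the dominating bound you propose, $w(N-s)/w(N) \leq C\rho^s$, does not give an integrable majorant: with $\phi_c$ normalised to $1$, any $\rho > 1$ makes $\sum_s w(s)\rho^s = z(\rho) = \infty$, so naive dominated convergence fails. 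The standard workaround (split the convolution at $N/2$, use symmetry, and extract a uniform-in-$N$ bound on $\sup_{s \leq N/2} w(N-s)/w(N)$ from subexponentiality itself) is precisely the content of \cite{J.Chover1973} and is not a triviality you can wave past.
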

Note that for $\phi_c \in (0,\infty )$ we may rescale the exponential part of the weights to get $\phi_c =1$ and we can further multiply with a constant, so that in the following we can assume without loss of generality that
\begin{equation}
w(0)=1\quad\mbox{and}\quad \phi_c =\lim_{n\to\infty} w(n-1)/w(n) =1\ .
\label{wlog}
\end{equation}
The condition (\ref{paff}) can also be written as
\begin{equation}
\lim_{N\to\infty} \frac{Z_{2,N}}{w(N)}=\lim_{N\to\infty} \frac{(w*w)(N)}{w(N)}\in (0,\infty )\quad\mbox{exists}\ ,
\label{paff2}
\end{equation}
where $(w*w)(N)=\sum_{k=0}^N w(k)w(N-k)$ is the convolution product. This is a standard characterization to define the class of sub-exponential distributions (see e.g. \cite{Klppelberg1989, Baltrunas2004}). Sub-exponentiality implies that a large sum of two random variables is typically realized by one of the variables taking a large value (see Section \ref{sec:subexp} for more details), which is of course reminiscent for the concept of condensation. Implications and simpler necessary conditions on $w(n)$ which imply (\ref{paff2}) have been studied in detail, and we provide a short discussion in Section \ref{sec: char cond}. 
As a consequence of Proposition \ref{prop1}, condensation is only a property of the tail of $w$, therefore if a process with stationary product measures \eqref{eqn: prod measure} condenses in the sense of Definition \ref{condensation} for some $L\geq 2$, it condenses for all $L \geq 2$. 

Proposition \ref{prop1} provides a generalization of previous results on condensation on finite lattices \cite{ferrarietal07} and is used in the proof of our main result, which is the following.

\begin{theorem}\label{mainres}
Consider a spatially homogeneous stochastic particle system as defined in Section \ref{sub sec: ips} which exhibits condensation in the sense of Definition \ref{condensation}, has stationary product measures that satisfy \eqref{regu}, and has finite critical density
\begin{equation}
\rho_c :=\nu_{\phi_c} (\eta_1) =\frac{1}{z(\phi_c )}\sum_{n=0}^\infty nw(n)\phi_c^n <\infty\ ,
\label{fmean}
\end{equation}
then the process is necessarily not monotone.\\
The same is true if (\ref{fmean}) is replaced by the assumption that\footnote{For functions $g,h: \mathbb{N} \to \mathbb{R}$ we use the notation $g(n) \sim h(n)$ if $\frac{g(n)}{h(n)} \to  c \in (0,\infty)$ as $n\to\infty$.} $w(n)\sim n^{-b}$ 
 with $b\in (3/2,2]$.
\end{theorem}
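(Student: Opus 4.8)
The plan is to argue by contradiction, assuming the process is monotone; by Lemma~\ref{order} this gives $\pi_{L,N}\le\pi_{L,N+1}$ for all $N\ge 0$, and by \eqref{wlog} I normalize $w(0)=1$, $\phi_c=1$, $z:=z(1)=\sum_n w(n)<\infty$. The crucial point is that one very simple increasing test function already constrains the partition function: take $f(\eta)=\1\{M_L(\eta)<S_L(\eta)\}$, the indicator that the configuration is \emph{not} concentrated on a single site. It is increasing (if $\eta\le\zeta$ and $\eta$ occupies at least two sites, so does $\zeta$), and for $N\ge1$ spatial homogeneity and $w(0)=1$ give $\pi_{L,N}(f)=1-L\,w(N)/Z_{L,N}$. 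Monotonicity forces $N\mapsto\pi_{L,N}(f)$ to be non-decreasing, hence $N\mapsto w(N)/Z_{L,N}$ non-increasing, so $w(N)/Z_{L,N}\ge\lim_{M\to\infty}w(M)/Z_{L,M}$ for every $N$.

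I would then identify this limit. Condensation gives $Z_{2,N}/w(N)\to 2z$ (this is exactly sub-exponentiality of $w$, Proposition~\ref{prop1}); feeding this into the recursion $Z_{L,N}=(w * Z_{L-1,\cdot})(N)$ and inducting on the lattice size — the ``induction over the size of the lattice'' of the statement — gives $Z_{L,N}/w(N)\to L z^{L-1}$ for all $L\ge2$. Hence monotonicity would imply
\[
Z_{L,N}\ \le\ L\,z^{L-1}\,w(N)\qquad\text{for all }N\ge1 ,
\]
and the contradiction will be that, under the finiteness assumption, $Z_{L,N}$ actually exceeds $Lz^{L-1}w(N)$ for all large $N$: the convergence $Z_{L,N}/w(N)\to Lz^{L-1}$ takes place from above.

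To prove that, I would expand $Z_{L,N}/w(N)=\sum_{m\ge0}Z_{L-1,m}\,w(N-m)/w(N)$ to second order, once more by induction on $L$, splitting into the region of bounded $m$ (where $w(N-m)/w(N)\to1$, with a correction governed by $\sum_m m\,Z_{L-1,m}$) and the symmetric region of bounded $N-m$. If $\rho_c<\infty$ the weights $\sum_m m\,Z_{L-1,m}=(L-1)z^{L-1}\rho_c$ are finite, the leading correction is of order $1/N$ with a strictly positive coefficient, so $Z_{L,N}/w(N)=Lz^{L-1}+\Theta(1/N)$ eventually — contradicting the display. In the power-law case $w(n)\sim n^{-b}$ with $b\in(3/2,2]$ the same expansion gives $Z_{L,N}/w(N)=Lz^{L-1}+c_L\,N^{1-b}(1+o(1))$ (with a $\log N/N$ correction at $b=2$), and the substance is that $c_L>0$ precisely when $b>3/2$: for $L=2$ this is the sign of $\int_0^{1/2}u^{-b}\big((1-u)^{-b}-1\big)\,du-\tfrac{2^{\,b-1}}{b-1}$, which vanishes at $b=3/2$ and is positive for $b\in(3/2,2]$, and the general case follows from the induction. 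In all cases $Z_{L,N}\le Lz^{L-1}w(N)$ fails, so the canonical measures are not ordered in $N$ and the process is not monotone.

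The main obstacle is this last step for a \emph{general} long-tailed $w$ with $\rho_c<\infty$: under only \eqref{regu} the second-order asymptotics of $Z_{L,N}/w(N)$ need not exist, so one cannot read off the sign of a single correction term and must instead extract, e.g.\ along subsequences, enough two-sided control to force $Z_{L,N}/w(N)>Lz^{L-1}$ at some $N$, using $\rho_c<\infty$ to tame the bounded-$m$ region. The threshold $b=3/2$ (where $c_L=0$) is exactly the point below which this positivity fails, which is why the statement splits into the finite-$\rho_c$ and the power-law hypotheses, with heavier tails deferred to Section~\ref{examples}.
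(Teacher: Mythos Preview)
Your strategy coincides with the paper's: the test function you use is essentially the complement of the paper's (the paper takes the decreasing indicator $\1\{\eta_1=\cdots=\eta_{L-1}=0\}$, you take an increasing indicator of ``not a single pile''), and both yield the same consequence that monotonicity forces $N\mapsto Z_{L,N}/w(N)$ to be non-decreasing, hence $\le Lz^{L-1}$ for all $N$. The power-law analysis you sketch is also the same as the paper's Lemma~\ref{lemma: infinite mean} (and Appendix~\ref{bis2} for $b=2$), with your integral expression of $c_2$ equivalent to the paper's series for $F_2(b)$.

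The genuine gap is precisely the step you flag as ``the main obstacle'': for a general long-tailed $w$ with $\rho_c<\infty$ you assert that $Z_{L,N}/w(N)=Lz^{L-1}+\Theta(1/N)$, but this is false without further regularity (e.g.\ for stretched-exponential $w$ the correction is of order $N^{\gamma-1}$, not $1/N$), and you then only say one ``must extract along subsequences'' without indicating how. This is where the paper supplies the missing idea. Define
\[
N_m=\max\Big\{n\le m:\ n\,w(n)=\min_{j\le m} j\,w(j)\Big\},
\]
so that $N_m\to\infty$ (since $\sum_n n\,w(n)<\infty$) and, crucially, $w(N_m-k)/w(N_m)\ge N_m/(N_m-k)\ge 1+k/N_m$ for all $0\le k<N_m$ (Claim~\ref{statement: ratio upper bound}). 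This pointwise lower bound on the ratio is exactly the ``two-sided control'' you need: along $N_m$ one gets, after an induction on $L$ and a further subsequence argument handling the location of $\min_{k\le N_m}Z_{L,k}$,
\[
\min_{n\le N_m}\Big(\frac{Z_{L,n}}{w(N_m)}-Lz^{L-1}\Big)\ \ge\ \frac{C}{N_m}
\]
for some $C>0$ and all large $m$ (Lemma~\ref{res2} and Claim~\ref{statement: 2 site limit above}). That produces some $N^\ast$ with $Z_{L,N^\ast}/w(N^\ast)>Z_{L,N^\ast+1}/w(N^\ast+1)$, contradicting monotonicity. Without this subsequence construction your argument for the finite-mean case does not close.
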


\subsection{Discussion}\label{sec:discussion}

The class of distributions which fulfil \eqref{paff} (called sub-expontial), and therefore exhibit condensation on finite lattices, is large (see e.g. \cite[Table 3.7]{CharlesM.Goldie}), and includes in particular
\begin{itemize}
\item power law tails
$w(n) \sim n^{-b}$ where $b>1$, 
\item log-normal distribution
\begin{equation}
w(n) = \frac{1}{n}\exp \{ -(\log(n)-\mu)^2 / (2 \sigma^2) \}\ ,
\label{lognormal}
\end{equation}
where $\mu \in \mathbb{R}$ and $\sigma > 0$, which always has finite mean,
\item stretched exponential tails $w(n) \sim \exp \{ -Cn^{\gamma}\}$ for $0<\gamma<1$, $C>0$,
\item almost exponential tails $w(n) \sim \exp \big\{-\frac{n}{log(n)^{\beta}} \big\}$ for $\beta>0$.
\end{itemize}
For the last two examples all polynomial moments are finite. This covers all previously studied models on condensation in zero-range processes \cite{evans00,ferrarietal07,agl}. 
As we will discuss in Section \ref{tlimit} all these examples also exhibit condensation in the thermodynamic limit, whenever they have finite first moment. 
It can also be shown that the limit in (\ref{paff2}) is necessarily equal to $2z(\phi_c )$ and that in fact $\frac{Z_{L,N}}{w(N)}\to Lz(\phi_{c})^{L-1}$ for any fixed $L\geq 2$ (see  \cite{J.Chover1973} and Proposition \ref{2tol}). 


Since we consider a fixed lattice $\Lambda$, $\rho_c <\infty$ is not a necessary condition for condensation as opposed to systems in the thermodynamic limit. Even if the distribution of particles outside the maximum has infinite mean, condensation in the sense of Definition \ref{condensation} can occur. However, if $z(\phi_c )=\infty$ (e.g.\ for power law tails with $b\leq 1$), the distribution of particles outside the maximum cannot be normalized, condition \eqref{paff} fails, and there is no condensation in the sense of our definition.


\begin{figure}[t]
\centering
\begin{subfigure}{0.49\linewidth}
\caption{\label{back:powerlaw}}
\centering
\includegraphics[scale=1.,clip=true,trim= 2em 0 0em 0]{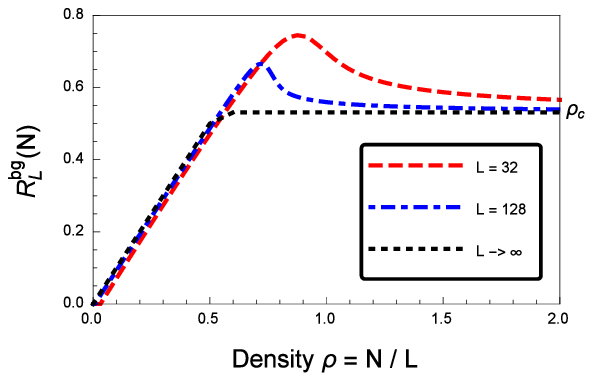}
\end{subfigure}
\centering
\begin{subfigure}{0.5\linewidth}
\centering
\caption{}
\label{back:lognormal}
\includegraphics[scale=1,clip=true,trim= 2em 0 0em 0]{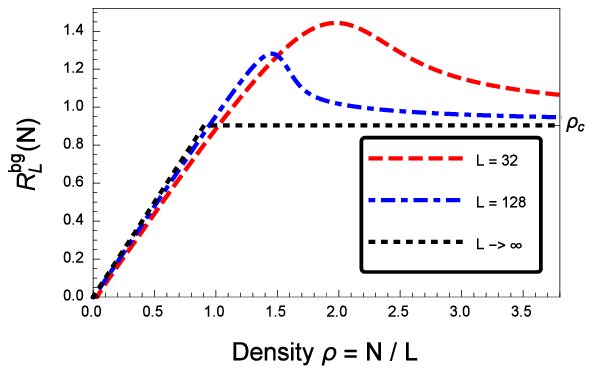}
\end{subfigure}

\begin{subfigure}{0.495\linewidth}
\caption{}
\centering
\label{back:almostexp}
\includegraphics[scale=1,clip=true,trim= 2em 0 0em 0]{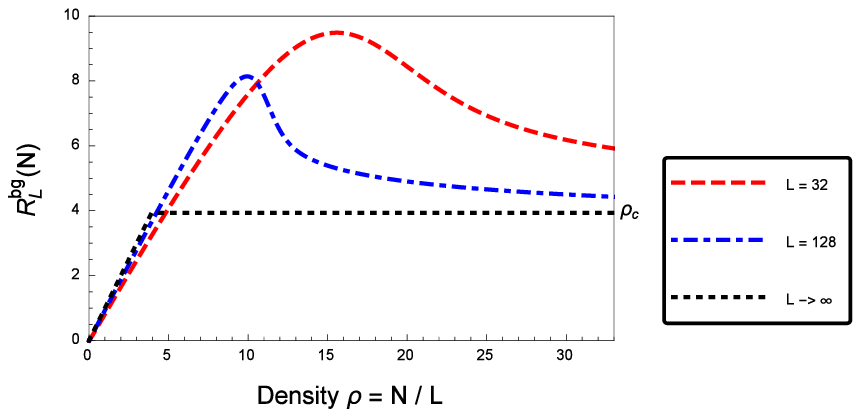}
\end{subfigure}
\caption{\label{backnumerics} Non-monotone behaviour of the expected background density $R_{L}^{bg}(N)$ \eqref{eq:backdensity} for lattice sizes $L=32$ and $L=128$; (A) (finite mean) power law tails with $b = 5$, (B) log-normal tails with $\mu = 0$ and $\sigma = 1/\sqrt{2}$, and (C) almost exponential tails with $\beta = 1$. The dotted black line shows the limit as $L,N \to \infty$ and $N/L\to\rho$, which is monotone and non-decreasing.
}
\end{figure}

We will prove non-monotonicity in the next section by showing that expectations for a particular monotone decreasing observable $f:\Omega_{L} \to\bbR$ under $\pi_{L,N}$ are not decreasing in $N$. The chosen function is related to (but not equal to) the number of particles outside the maximum (condensate), which has been shown previously to exhibit non-monotone behaviour for a class of condensing zero-range processes in the thermodynamic limit \cite{chlebounetal10,agl}. 
When the number of particles $N>\rho_c L$ just exceeds the critical value, typical configurations still appear homogeneous with a maximum occupation number of\footnote{For functions $f,g:\bbN \to \bbR$ we use the notation $g=o_{n}(f)$ if $\frac{g(n)}{f(n)} \to 0$ as $n \to \infty$.} $o_{N}(N)$. Only when the number of particles is increased further the system switches to a condensed state with a maximum that contains a non-zero fraction of all particles. We present numerical evidence of this non-monotone switching behaviour for the background density 
\begin{equation}
\label{eq:backdensity}
R_{L}^{bg}(N) := \frac{1}{L-1}\pi_{L,N}(N-M_{L})
\end{equation} 
in Figure \ref{backnumerics}. This is a finite size effect which disappears in the limit $L\to\infty$, and for specific models it has been shown to be related to the existence of super-critical homogeneous metastable states \cite{chlebounetal10,agl,Chleboun2015}. 
For large $L$, the switching to condensed states occurs abruptly over a relatively small range of values for $N$.
Since the $\pi_{L,N}$ are conditional product measures the correlations in the system are very weak, which causes metastable hysteresis effects and non-monotonicity of the canonical ensemble around the critical point.
Metastable hysteresis has been established in \cite{chlebounetal10,agl} for zero-range processes.
Our result implies that this behaviour is generic for all condensing systems with finite critical density. We also give a heuristic discussion of the connection to convexity properties of the entropy of the system in Appendix \ref{sec:statmech}.

%
There are several examples of homogeneous, condensing, monotone particle systems with finite critical density which have been studied on a heuristic level and which we summarize in Section \ref{examples}. Their stationary measures are not of product form and no explicit formulas are known, so these systems are therefore hard to analyse rigorously.
For systems with non-product stationary measures, upward fluctuations in the density which are homogeneously distributed may be suppressed strongly enough, so that the metastable states do not exist. Such models may then also be monotone, and examples are given in Section \ref{homnonpro}.

We excluded the case $\phi_{c} = 0$ in the presentation in Section \ref{sec:resu} for notational convenience, but it is easy to see that our results also hold in this case. With the convention $0^0=1$ we have $z(0)=w(0)=1$ and $\rho_c =0$, and then existence of the limit $Z_{2,N} /w(N)$ is equivalent to
\begin{equation*}
\pi_{2,N} [M_2 =N]=2\frac{w(N)w(0)}{Z_{2,N}} \to \frac{2w(0)}{2z(0)}=1\, \textrm{ as }N \to \infty ,
\end{equation*}
i.e.\ condensation of all $N$ particles on a single site. This can easily be extended to all $L\geq 2$ with Proposition \ref{2tol}. 
Considering only events with all $N$ particles on one site, or $N-1$ particles on one site  and $1$ particle elsewhere, we have convergence from above
\begin{equation*}
\frac{Z_{L,N}}{w(N)} - Lw(0)^{L-1}   >  L(L-1)w(0)^{L-2}w(1)\frac{w(N-1)}{w(N)} > 0 \ .
\end{equation*}
This implies the non-monotonicity of $\pi_{L,N}$ as discussed in Section \ref{sec:proof}.\\

\section{Proof of Theorem \ref{mainres}}\label{sec:proof}


We assume that the process exhibits condensation in the sense of Definition \ref{condensation} and has stationary product measures, so the canonical measures $\p_{L,N}$ are of the form \eqref{eqn: prod measure}. Furthermore, we assume the weights satisfy the regularity assumption, and without loss of generality $\phi_{c} =1$, see  \eqref{wlog}. We show that the family of canonical measures is not stochastically ordered in $N$, which implies non-monotonicity of the process by Lemma \ref{order}. To achieve this, we use the test function
\begin{equation}
f(\eta) := \1(  \eta_{1} = \eta_{2} = \ldots = \eta_{L-1} = 0  )\ ,
\label{eqn: dec observable}
\end{equation}
which indicates the event where all particles concentrate in the maximum at site $L$.



\begin{lemma}
\label{statement: monotonicity}
The function $f:\Omega_{L} \to\bbR$ defined in (\ref{eqn: dec observable}) is monotonically decreasing, which implies that
\begin{equation}
\frac{Z_{L,N}}{w(N)}\leq \frac{Z_{L,N+1}}{w(N+1)}\quad\mbox{for all }N\geq 0\ , 
\label{eq:zmono}
\end{equation}
whenever the family of canonical measures $\pi_{L,N}$ is stochastically ordered in $N$.
\end{lemma}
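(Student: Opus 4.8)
The plan is to treat the two assertions of the lemma in turn: first that $f$ is monotonically decreasing on $\Omega_L$, and then to deduce the stated inequality between the ratios $Z_{L,N}/w(N)$ directly from the definition of stochastic order.

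For monotonicity of $f$, I would argue straight from the definitions. Suppose $\eta\leq\zeta$, i.e.\ $\eta_x\leq\zeta_x$ for all $x\in\Lambda$, and suppose $f(\zeta)=1$, so that $\zeta_1=\cdots=\zeta_{L-1}=0$. Then $0\leq\eta_x\leq\zeta_x=0$ for $x=1,\dots,L-1$, hence $\eta_1=\cdots=\eta_{L-1}=0$ and $f(\eta)=1$. Since $f$ takes values in $\{0,1\}$, this shows $\eta\leq\zeta\Rightarrow f(\eta)\geq f(\zeta)$; equivalently, $-f$ is a bounded increasing function.

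For the second assertion, the key observation is that when restricted to $\Omega_{L,N}$ the event $\{\eta_1=\cdots=\eta_{L-1}=0\}$ consists of the single configuration $\eta^{(N)}$ with $\eta^{(N)}_L=N$ and all other coordinates zero, because the total occupation is conserved and equal to $N$. Hence from the product form \eqref{eqn: prod measure} and the normalisation $w(0)=1$ of \eqref{wlog},
\[
\pi_{L,N}(f)=\pi_{L,N}\big[\eta^{(N)}\big]=\frac{w(N)\,w(0)^{L-1}}{Z_{L,N}}=\frac{w(N)}{Z_{L,N}}\ .
\]
If the canonical measures are stochastically ordered, $\pi_{L,N}\leq\pi_{L,N+1}$, then applying the definition of stochastic order to the bounded increasing function $-f$ gives $\pi_{L,N}(-f)\leq\pi_{L,N+1}(-f)$, i.e.\ $\pi_{L,N}(f)\geq\pi_{L,N+1}(f)$, which reads $w(N)/Z_{L,N}\geq w(N+1)/Z_{L,N+1}$. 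Since all four quantities are strictly positive, taking reciprocals gives exactly \eqref{eq:zmono}.

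There is no real obstacle in this argument; it is essentially a bookkeeping exercise. The only two points that need a little attention are: (i) recognising that the defining event of $f$ collapses to a single configuration once one conditions on the total particle number, which is what makes $\pi_{L,N}(f)$ computable in closed form; and (ii) tracking the direction of the inequalities carefully when switching between the decreasing function $f$ and the increasing function $-f$ and then inverting. If one does not wish to invoke the normalisation $w(0)=1$, the factor $w(0)^{L-1}$ simply appears on both sides and cancels, so the conclusion is unchanged.
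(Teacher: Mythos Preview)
Your proof is correct and follows essentially the same approach as the paper: verify $f$ is decreasing directly from the definition, compute $\pi_{L,N}(f)=w(N)/Z_{L,N}$ using the product form and $w(0)=1$, and then read off the inequality from stochastic ordering. The only cosmetic difference is that the paper argues $f(\eta)=0\Rightarrow f(\zeta)=0$ while you argue the contrapositive $f(\zeta)=1\Rightarrow f(\eta)=1$, and you spell out the passage through $-f$ and the reciprocal step a bit more explicitly.
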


\begin{proof}

Fix configurations $\eta,\zeta \in \Omega_{L}$ such that $\eta \leq \zeta$. If $f(\eta) = 0$ then $\eta$ has at least one particle outside of site $L$, therefore so does $\z$ which implies $f(\zeta) = 0$. If $f(\eta) = 1$ then necessarily $f(\h)  \geq f(\z)$ since $f(\zeta) \in \{0,1 \}$. Therefore $f$ is a decreasing function. Using \eqref{eqn: prod measure} and the convention (\ref{wlog}), we find that the canonical expectation of the function \eqref{eqn: dec observable} is given by
\begin{align*}
\pi_{L,N} (f)=\frac{w(0)^{L-1} w(N)}{Z_{L,N}} =\frac{w(N)}{Z_{L,N}}\ .
\end{align*}
So if the canonical measures are monotone in $N$, monotonicity of $f$ implies (\ref{eq:zmono}).
\end{proof}

\begin{figure}[t]
\centering
\begin{subfigure}{0.49\linewidth}
\caption{ }
\label{test:powerlaw}
\centering
\includegraphics[scale=1.05,clip=true,trim= 1em 0 0em 0]{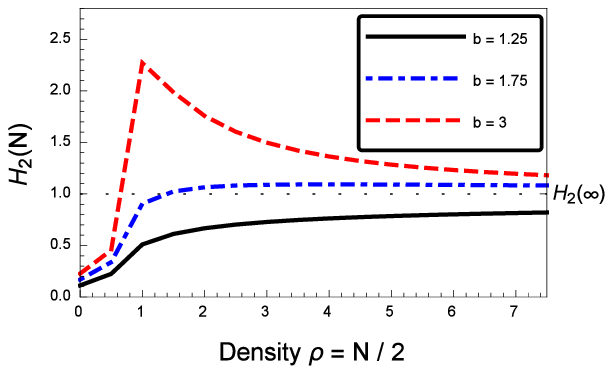}
\end{subfigure}
\centering
\begin{subfigure}{0.5\linewidth}
\centering
\caption{}
\label{test:lognormal}
\includegraphics[scale=1.05,clip=true,trim= 1em 0 0em 0]{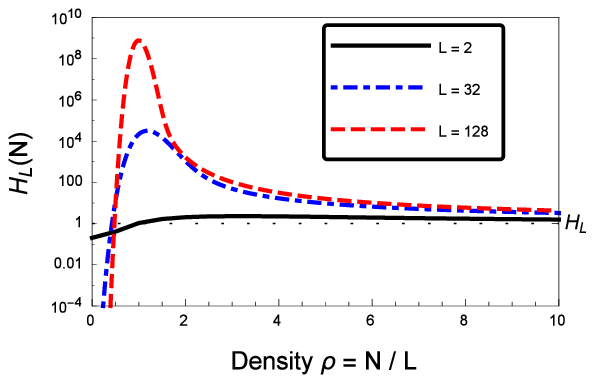}
\end{subfigure}
\caption{\label{testnumerics} Non-monotone behaviour of $H_{L}(N)$ \eqref{eq:scaledtest}, which is the expected value of the observable \eqref{eqn: dec observable} rescaled by its limit; (A) power law tails $w(n)\sim n^{-b}$ for $L=2$ with $b=3$, $1.75$ and $1.35$, where the latter is conjectured to be monotone (see Section \ref{examples}); (B) log-normal tails \eqref{lognormal} with $\mu = 0$ and $\sigma = 1/\sqrt{2}$ for $L=2$, $L=32$ and $L=128$.}
\end{figure}
By Proposition \ref{prop1} we know that for condensing systems the ratio $Z_{2,N}/w(N)$ converges. This implies the sequence $Z_{L,N}/w(N)$ in Lemma \ref{statement: monotonicity} converges, as summarised in the following proposition using our notation (see \cite[Theorem 1 and Lemma 5]{J.Chover1973}). 
\begin{proposition}\label{2tol}
Consider conditional product measures (\ref{eqn: prod measure}) with weights $w(n)>0$ for all $n \in \bbN$, which satisfy 
\begin{itemize}
\item $\frac{w(n-1)}{w(n)} \to \phi_{c}$ as $n \to \infty$, the regularity assumption (\ref{wlog}), \\
\vspace{-1em}
\item $z(\phi_{c}) < \infty$, \\ \vspace{-1em}
\item $\frac{Z_{2,N}}{w(N)} \to C$ as $N \to \infty$.
\end{itemize}
Then $C = 2 z(\phi_{c})$ and furthermore,
\begin{equation}
\label{eq:limitl}
\frac{Z_{L,N}}{w(N)} \to L z(\phi_{c})^{L-1} \textrm{ as }N \to \infty \textrm{ for all }L \geq 2 \ .
\end{equation}
\end{proposition}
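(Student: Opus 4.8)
The plan is to strip the statement down to convolution powers of a probability mass function and then isolate the one genuinely hard input — the exact value of the $L=2$ limit, which is precisely the statement that the weight is subexponential in the local (density) sense — from an otherwise routine induction on $L$; in effect this is a reformulation of results of Chover--Ney--Wainger \cite{J.Chover1973}. First I would normalise away $\phi_c$: replacing $w(n)$ by $\phi_c^{\,n}w(n)$ multiplies both $Z_{L,N}$ and $w(N)$ by $\phi_c^{\,N}$, so it leaves every ratio $Z_{L,N}/w(N)$ unchanged while turning the regularity assumption into $w(n-1)/w(n)\to1$. Since $z(\phi_c)<\infty$, putting $z:=z(\phi_c)\in(0,\infty)$ and $p(n):=\phi_c^{\,n}w(n)/z$ gives a probability mass function on $\bbN$ with $p(n-1)/p(n)\to1$, and because $Z_{L,N}$ is the $L$-fold convolution of the weights, this (unchanged) ratio equals
\[
\frac{Z_{L,N}}{w(N)}=z^{\,L-1}\,\frac{p^{*L}(N)}{p(N)} .
\]
Thus the proposition is equivalent to: if $p^{*2}(N)/p(N)\to a$ with $a$ finite then $a=2$ and $p^{*L}(N)/p(N)\to L$ for all $L\ge2$. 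The only facts about $p$ used below are $\sum_n p(n)=1$ and \emph{long-tailedness}, $p(n-j)/p(n)\to1$ as $n\to\infty$ for each fixed $j$ (a product of finitely many ratios, each tending to $1$).

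For the value of the constant, the bound $a\ge2$ is elementary: for $N\ge2m+1$ the ranges $\{0,\dots,m\}$ and $\{N-m,\dots,N\}$ are disjoint, so $p^{*2}(N)/p(N)\ge2\sum_{j=0}^{m}p(j)p(N-j)/p(N)$, which tends to $2\sum_{j=0}^m p(j)$ as $N\to\infty$ and then to $2$ as $m\to\infty$; hence $C\ge2z(\phi_c)$. The reverse inequality $a\le2$ — equivalently, that the bulk mass $\sum_{m<j<N-m}p(j)p(N-j)$ is $o(p(N))$ after $m\to\infty$, i.e.\ that $p$ is a subexponential density — is the point where I would simply invoke \cite[Theorem~1]{J.Chover1973}: a long-tailed density for which $p^{*2}(n)/p(n)$ has a finite limit has limit $2$. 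This yields $C=2z(\phi_c)$.

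Given $a=2$, the statement for general $L$ follows by an induction I would carry out directly rather than quote \cite[Lemma~5]{J.Chover1973}. Assume $p^{*L}(N)/p(N)\to L$ and set $K_L:=\sup_N p^{*L}(N)/p(N)<\infty$. Splitting $p^{*(L+1)}(N)=\sum_{k=0}^N p(k)p^{*L}(N-k)$ at $k\le m$, $m<k<N-m$, and $k\ge N-m$: by long-tailedness and the inductive hypothesis the first block tends to $L\sum_{k\le m}p(k)$ and the third (after reindexing $j=N-k$) to $\sum_{j\le m}p^{*L}(j)$, while using $p^{*L}(N-k)\le K_L\,p(N-k)$ the middle block is at most
\[
K_L\Big(\frac{p^{*2}(N)}{p(N)}-2\sum_{k=0}^{m}\frac{p(k)p(N-k)}{p(N)}\Big)\xrightarrow[N\to\infty]{}K_L\Big(a-2\sum_{k=0}^{m}p(k)\Big).
\]
Letting $m\to\infty$ and using $a=2$ together with $\sum_k p(k)=\sum_j p^{*L}(j)=1$, the three contributions sum to $L+1$, so $p^{*(L+1)}(N)/p(N)\to L+1$; undoing the normalisation gives $Z_{L,N}/w(N)\to L\,z(\phi_c)^{L-1}$.

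The main obstacle is the single inequality $a\le2$. Everything else is bookkeeping, but this step is exactly the local subexponential property — a large value of a convolution is typically produced by one large summand — and I do not see how to obtain it from long-tailedness plus the bare existence of a finite limit without the Chover--Ney--Wainger machinery, so I would present it as a citation rather than reprove it.
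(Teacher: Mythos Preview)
Your proposal is correct and matches the paper's approach: the paper does not actually prove Proposition~\ref{2tol} but simply cites \cite[Theorem~1 and Lemma~5]{J.Chover1973}, exactly the source you invoke for the identification $C=2z(\phi_c)$. The only difference is that you spell out the induction over $L$ (the content of \cite[Lemma~5]{J.Chover1973}) rather than citing it, and your argument there is the standard one and is sound --- in particular the use of $K_L=\sup_N p^{*L}(N)/p(N)<\infty$ to control the middle block via the $L=2$ result is the key device, and it is justified since $p(N)>0$ and the ratio converges.
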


Note that the limit in \eqref{eq:limitl} states that the probability of observing a large total number of particles under the critical product measure is asymptotically equivalent to the probability of observing a large number of particles on any one of the $L$ sites, precisely 
  \begin{align*}
    \lim_{N\to\infty} \frac{\nu_{\phi_c}^L\left[S_L(\h) = N\right]}{L\,\nu_{\phi_c}[\h_1 = N]} =1\,.
  \end{align*}
This is further equivalent to the canonical probability of the maximum containing the total mass 
converges to the critical probability that $L-1$ sites are empty, i.e.\ $\pi_{L,N} [ M_L = N] \to \nu_{\phi_c}^{L-1} [ \h \equiv 0 ]$.

To complete the proof we show that a subsequence of $Z_{L,N}/w(N)$ converges from above, which contradicts the assumption of monotonicity by Lemma \ref{statement: monotonicity}. We present a numerical illustration for the monotonicity properties of the function
\begin{equation} \label{eq:scaledtest} 
H_{L}(N) := \frac{1}{L z(1)^{L-1}}\frac{Z_{L,N}}{w(N)}  
\end{equation}
in Figure \ref{testnumerics}, which is normalized such that $H_L (N)\to 1$ as $N\to\infty$. 
The proof of the following lemma represents the most significant part of the proof of Theorem \ref{mainres} and is given in Section \ref{sec:finite-mean} for the case of finite mean and in Section \ref{sec:power-law} for the power law case.
\begin{lemma}\label{res2}
Under the conditions of Theorem \ref{mainres}, and assuming without loss of generality $\phi_c=1$, for each $L \geq 2$ there exists a $C > 0$ and a sequence $N_m \in\bbN$ with $N_m \to\infty$ as $m\to\infty$ such that
\begin{align*}
\min_{n\leq N_m}\bigg( \frac{Z_{L,n}}{w(N_{m})} - L z(1)^{L-1}  \bigg) \geq C/N_m \ ,
\end{align*}
\end{lemma}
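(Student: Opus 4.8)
The plan is to write $Z_{L,n}/w(N_m)=g_L(n)\,w(n)/w(N_m)$ with $g_L(n):=Z_{L,n}/w(n)$, and to use that $g_L(n)\to A_L:=L\,z(1)^{L-1}$ as $n\to\infty$ by Proposition~\ref{2tol} (whose hypotheses hold here: condensation forces $z(1)<\infty$ via Proposition~\ref{prop1}, hence $w(n)\to 0$, and $Z_{2,N}/w(N)$ converges by~\eqref{paff2}). The statement then splits into the choice of the subsequence $(N_m)$ and a \emph{quantitative, one-sided} control of the convergence $g_L\to A_L$.

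For the subsequence, note that since $w(0)=1$ and $w(n)\to 0$ the running minimum $m(n):=\min_{0\leq j\leq n}w(j)$ is non-increasing with limit $0$, hence strictly decreases at infinitely many indices; let $N_1<N_2<\cdots$ be those indices, so that $w(n)\geq w(N_m)$ for all $n\leq N_m$ and $w(N_m)\to 0$. Then $Z_{L,n}/w(N_m)\geq g_L(n)$ for every $n\leq N_m$. Granting the key estimate below, fix the threshold $n_0$ it provides: for $n\leq n_0$ we have $Z_{L,n}\geq\min_{j\leq n_0}Z_{L,j}=:c_0>0$, so $Z_{L,n}/w(N_m)\geq c_0/w(N_m)\geq A_L+1$ once $m$ is large; for $n_0\leq n\leq N_m$ we get $Z_{L,n}/w(N_m)\geq g_L(n)\geq A_L+c_L/n\geq A_L+c_L/N_m$. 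Taking the minimum over both ranges and relabelling the tail of $(N_m)$ gives the lemma. Thus everything reduces to the \emph{key estimate}: there exist $n_0\in\bbN$ and $c_L>0$ with $g_L(n)\geq A_L+c_L/n$ for all $n\geq n_0$, equivalently $Z_{L,n}\geq L\,z(1)^{L-1}\,w(n)+c_L\,w(n)/n$.

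I would prove the key estimate by induction on $L$. For $L=2$ it is a direct computation with $Z_{2,n}=\sum_{k=0}^n w(k)w(n-k)$, splitting the sum at $k=n/2$ and using the symmetry $k\leftrightarrow n-k$. For $L-1\to L$, use $Z_{L,n}=\sum_{k=0}^n w(n-k)Z_{L-1,k}$ together with the identity $A_L=\sum_{k\geq 0}Z_{L-1,k}+z(1)A_{L-1}$ to write
\[
g_L(n)-A_L=\sum_{k=0}^{n}Z_{L-1,k}\Big(\tfrac{w(n-k)}{w(n)}-1\Big)-\sum_{k>n}Z_{L-1,k}-z(1)A_{L-1},
\]
and split the first sum at $k=n/2$. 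For $k\leq n/2$ one uses $w(n-k)/w(n)\geq 1$ and that $w(n-k)/w(n)-1$ is asymptotically a positive multiple of $k/n$ (consequences of the regularity assumption~\eqref{regu}); for $k>n/2$, substituting $j=n-k$, writing $Z_{L-1,n-j}=g_{L-1}(n-j)w(n-j)$ and inserting the inductive bound $g_{L-1}(n-j)\geq A_{L-1}+c_{L-1}/(n-j)$ recovers $z(1)A_{L-1}$ plus a strictly positive surplus. In the finite-mean case $\sum_k kZ_{L-1,k}=(L-1)z(1)^{L-1}\rho_c<\infty$, so both surpluses are $\gtrsim 1/n$ while the error terms $\sum_{k>n}Z_{L-1,k}$ and $\sum_{j>n/2}w(j)$ are $o(1/n)$; in the power-law case $w(n)\sim n^{-b}$ with $b\in(3/2,2]$ all these quantities are of order $n^{1-b}\geq n^{-1}$, the boundary layers and the bulk of the convolution must be handled together through the second-order asymptotics of convolutions of regularly varying sequences, and the hypothesis $b>3/2$ is precisely what makes the resulting constant strictly positive.

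The main obstacle is the key estimate, and inside it the power-law regime: one needs not merely $g_L(n)\to A_L$ but convergence \emph{from above} at rate $\gtrsim 1/n$, which requires pinning down the sign and order of the subleading term in the sub-exponential convolution asymptotics uniformly in $L$; this is where $b=3/2$ emerges as a genuine threshold, below which the sign is no longer forced (consistent with the conjecturally monotone examples of Section~\ref{examples}). A secondary technical point is to run the argument without assuming the weights $w$ are monotone, which is exactly why the subsequence is taken along the running minima of $w$ and why only the above consequences of~\eqref{regu} are used.
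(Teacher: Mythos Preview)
Your overall architecture---choose a subsequence, reduce to a quantitative one-sided bound, and induct on $L$---is the same as the paper's, and for exact power laws your sketch is essentially the paper's computation in Lemma~\ref{lemma: infinite mean}. The finite-mean case, however, has a genuine gap: the ``key estimate'' $g_L(n)\geq A_L+c_L/n$ for \emph{all} large $n$ is not a consequence of the hypotheses, and the two facts you invoke to prove it, namely ``$w(n-k)/w(n)\geq 1$'' and ``$w(n-k)/w(n)-1$ is asymptotically a positive multiple of $k/n$'', do \emph{not} follow from the regularity assumption~\eqref{regu}. That assumption only says $w(n-1)/w(n)\to 1$; it forces neither $w(n-1)/w(n)\geq 1$ nor any sign on the correction. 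A concrete counterexample satisfying~\eqref{regu}, sub-exponentiality (condition~(i) in Section~\ref{sec:subexp} holds) and $\rho_c<\infty$ is $w(n)=n^{-3}e^{\sin\sqrt{n}}$: here $w(n-1)/w(n)-1\sim -\cos(\sqrt n)/(2\sqrt n)$, and a direct expansion gives $g_2(n)-2z(1)\sim -\rho_c z(1)\cos(\sqrt n)/\sqrt n$, which oscillates in sign with amplitude $\sim n^{-1/2}$. So no uniform bound $g_2(n)\geq 2z(1)+c/n$ is available. Your subsequence (running minima of $w$) gives only $w(n)/w(N_m)\geq 1$, which is not enough: for intermediate $n\leq N_m$ you still pass through the false pointwise estimate on $g_L(n)$. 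Note also that your subsequence construction never uses $\rho_c<\infty$, whereas this hypothesis is exactly what separates the provable from the (conjecturally) unprovable range.

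The paper's proof uses the finite-mean hypothesis precisely here: it takes $N_m$ along the running minima of $n\,w(n)$ (these exist and diverge because $\sum n\,w(n)<\infty$), which yields the \emph{quantitative} bound $w(N_m-k)/w(N_m)\geq N_m/(N_m-k)\geq 1+k/N_m$ for all $k<N_m$, replacing the two claims you could not justify. A second subtlety is that even along this sequence one cannot control $g_L(n)$ for arbitrary $n\leq N_m$; instead one introduces $K_m=\arg\min_{k\leq N_m}Z_{L,k}$, passes to a further subsequence with $K_\ell/N_\ell\to 1$ (if the limsup were $r<1$ the ratio bound would give $Z_{L,K_\ell}/w(N_\ell)\to (2-r)A_L>A_L$, contradicting convergence), and then expands $Z_{L,K_\ell}/w(N_\ell)$ directly using the bound $w(K_\ell-k)/w(N_\ell)\geq 1+(N_\ell-K_\ell+k)/N_\ell$. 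The induction is therefore on $\min_{n\leq N_m}Z_{L,n}/w(N_m)$, not on $g_L$.
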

Therefore, we know that there exists some $N^* \in \bbN$ such that $Z_{L,N^*}/w(N^*) > Z_{L,N^*+1}/w(N^*+1)$, which by Lemma \ref{statement: monotonicity} implies that the canonical measures are not stochastically ordered in $N$, and thus the process cannot be monotone by Lemma \ref{order}. This completes the proof of Theorem \ref{mainres}.




\subsection{Proof of Lemma \ref{res2}: The finite mean case}
\label{sec:finite-mean}

In order to prove that $\frac{Z_{2,N_m}}{w(N_m)}$ converges from above for some non-decreasing sequence $N_m$ we first specify a sequence on which we can bound the ratio $\frac{w(N_m -n)}{w(N_m)}$ below.


\begin{claim}
\label{statement: ratio upper bound}
For weights $\{ w(n) : n \in \mathbb{N} \}$ with finite and non-zero first moment, \textit{i.e.} $0<\rho_{c} < \infty$, there exists a sequence
$N_{m} \in \mathbb{N}$ with $N_{m} \to \infty$ as $m \to \infty$ such that for all $k \in \{0,\ldots , N_{m} - 1 \}$
\begin{equation}
\frac{w(N_{m}-k)}{w(N_{m})} \geq 1+\frac{k}{N_{m}} \textrm{.}
\end{equation}
\end{claim}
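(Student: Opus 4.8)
The plan is to use the finite-mean assumption to control the ratios $w(N-k)/w(N)$ on a well-chosen subsequence. Under the normalization $\phi_c=1$, regularity assumption \eqref{wlog} gives $w(n-1)/w(n)\to 1$, so taking logarithms, the increments $a_n := \log w(n-1) - \log w(n)$ satisfy $a_n\to 0$. The quantity we want to bound from below, $\log w(N-k) - \log w(N) = \sum_{j=N-k+1}^{N} a_j$, is a sum of $k$ terms each close to $0$, so the natural approach is to show that along a good subsequence the partial sums $\sum_{j} a_j$ going backwards from $N$ stay above $k/N$ (which, after exponentiating and using $e^x \geq 1+x$, would give the claim, modulo the fact that $k/N$ itself is small so $\log(1+k/N)\approx k/N$ — one should be slightly careful and perhaps prove the sharper statement $\sum_{j=N-k+1}^N a_j \geq \log(1+k/N)$, or simply $\geq k/N$ since $\log(1+x)\leq x$ works in our favour here: we need $w(N-k)/w(N)\geq 1+k/N$, i.e. $\sum a_j \geq \log(1+k/N)$, and since $\log(1+k/N)\leq k/N$ it suffices to get $\sum a_j \geq k/N$).

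The key input is that $\rho_c<\infty$ forces $\sum_n n\, w(n) <\infty$ (after normalizing by $z(1)$), hence in particular $n\,w(n)\to 0$, and more usefully $w(n)\to 0$ summably fast. The cleanest route: finite mean implies $\sum_{n} w(n) < \infty$ and $\sum_n n w(n)<\infty$, so $w$ is eventually decreasing "on average". I would argue that $\liminf_{n} n\big(\frac{w(n-1)}{w(n)} - 1\big) \geq 1$ cannot fail on too large a set — more precisely, if $\frac{w(n-1)}{w(n)} < 1 + \frac{1}{n}$ for all large $n$, then $w(n) \geq c \exp\big(-\sum_{j\leq n} \log(1+1/j)\big) \sim c/n$, contradicting summability of $n\,w(n)$ (which would need $w(n) = o(1/n^2)$ at least along a subsequence, in fact $\sum n w(n)<\infty$ forces $\liminf n^2 w(n) = 0$). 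So there must be infinitely many $n$ — call them $N_m$ — with $\frac{w(N_m-1)}{w(N_m)} \geq 1 + \frac{1}{N_m}$. That handles $k=1$, but we need it for \emph{all} $k\leq N_m-1$ simultaneously at the same $N_m$, which is the real obstacle.

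To get the full range of $k$, I would telescope: pick $N_m$ to be a point where $w$ attains (or nearly attains) a record-small value relative to its recent history, specifically where $w(N_m) \leq w(n)\cdot\frac{N_m}{n}$ fails to be violated, or more concretely choose $N_m$ among the points realizing $\liminf_n n^{?} w(n)$-type extremality. A robust choice: since $\sum_n w(n)<\infty$ forces, for infinitely many $N$, that $w(N) \le \frac{1}{N}\sum_{j=1}^{N} w(j) \cdot(\text{something})$... Actually the cleanest: define $N_m$ via the requirement that $N_m \, w(N_m) = \min_{n\le N_m} n\, w(n)$ infinitely often — such "record" indices exist because $n w(n)\to 0$ is false in general but $\liminf n w(n) = 0$ holds (from $\sum w(n)<\infty$, or at worst use $\liminf n^{1+\delta} w(n)$ from $\sum n w(n) < \infty$ which gives something stronger). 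At such a record index, for every $k \leq N_m-1$ one has $(N_m - k) w(N_m-k) \geq N_m\, w(N_m)$, which rearranges exactly to $\frac{w(N_m-k)}{w(N_m)} \geq \frac{N_m}{N_m-k} = 1 + \frac{k}{N_m-k} \geq 1 + \frac{k}{N_m}$. That is precisely the claim. So the whole proof reduces to: (i) $\rho_c<\infty \Rightarrow \liminf_{n\to\infty} n\,w(n) = 0$, hence the set of "running-minimum" indices of $n\mapsto n w(n)$ is infinite; (ii) at each such index $N_m$ the displayed inequality holds by the one-line rearrangement above; (iii) $N_m\to\infty$ since the running minimum of a sequence tending to $0$ along a subsequence is achieved at indices going to infinity (as the minimum strictly decreases).

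The main obstacle I anticipate is step (i): deducing $\liminf n\,w(n)=0$ rigorously and in the right form. Finite mean gives $\sum_n n\,w(n)<\infty$, which immediately yields $\liminf n^2 w(n) = 0$ (if $n^2 w(n)\geq c>0$ eventually then $\sum n w(n)\geq \sum c/n=\infty$), a fortiori $\liminf n w(n)=0$ — so this is actually routine. The only genuine subtlety is making sure the chosen $N_m$ where $n w(n)$ hits a new minimum indeed tends to infinity and that we don't accidentally use $k$ up to $N_m$ rather than $N_m - 1$ (the $k=N_m$ case would involve $w(0)$ and the bound $1+1$, which is fine but excluded anyway). One should also note we only need $0<\rho_c$ somewhere? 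Actually $\rho_c>0$ is not needed for this claim — it is used elsewhere; the claim as stated asks for it but the argument above works as long as $w$ is not eventually $0$, which holds since $w(n)>0$ for all $n$.
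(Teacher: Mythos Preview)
Your proposal is correct and, after the exploratory preamble, arrives at exactly the paper's argument: take $N_m$ to be running-minimum indices of $n\mapsto n\,w(n)$ (which tend to infinity since finite mean forces $\liminf_n n\,w(n)=0$), then rearrange $(N_m-k)\,w(N_m-k)\geq N_m\, w(N_m)$ to get $\frac{w(N_m-k)}{w(N_m)}\geq \frac{N_m}{N_m-k}\geq 1+\frac{k}{N_m}$. One aside: contrary to your remark, $n\,w(n)\to 0$ \emph{does} hold here since $\sum_n n\,w(n)<\infty$ makes the terms of a convergent series, but you only use the $\liminf$ statement and your argument for that is fine.
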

\begin{proof}

For each $m \in \mathbb{N}$, define $N_{m}$ as follows 
\begin{align*}
N_{m}=\max \{n \leq m : n \, w(n) = \min_{j \leq m} j\, w(j) \}\ .
\end{align*}
By definition $N_m$ is a non-decreasing sequence. Assume for contradiction that $N_m$ is bounded above, then for all $j \in \mathbb{N}$ we would have $j w(j) \geq j^{\star}w(j^{\star}) >0 $ for some $j^{\star} \in \mathbb{N}$, and therefore $\sum_n n w(n)\to \infty $ contradicting the assumption of finite first moment. For $k \in \{0,\ldots N_{m}-1 \}$ we have
\begin{align*}
(N_{m}-k)w(N_{m}-k) &\geq N_{m} w(N_{m}) \\
\mbox{and thus}\quad\frac{w(N_{m}-k)}{w(N_{m})} &\geq \frac{N_{m} }{(N_{m}-k)} \geq 1+\frac{k}{N_{m}} \textrm{.} 
\end{align*}
\end{proof}


\begin{claim}
\label{statement: 2 site limit above}
For weights $\{ w(n) : n \in \mathbb{N} \}$ with finite first moment,
there exists a subsequence $\{N_{\ell} : \ell \in \mathbb{N}\}$ of the sequence defined in Claim \ref{statement: ratio upper bound} such that $\frac{Z_{2,N_{\ell}-n}}{w(N_{\ell})}>2 z(1)$ for all $n \in \{0,\ldots ,N_{\ell} \}$ and $\ell$ sufficiently large.
\end{claim}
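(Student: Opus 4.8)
The plan is to produce, for every sufficiently large $m$, a lower bound on the convolution $Z_{2,j}=\sum_{k=0}^{j}w(k)w(j-k)$ that exceeds $2z(1)\,w(N_m)$ \emph{simultaneously for all} $j\in\{0,\dots,N_m\}$. Applying this with $j=N_m-n$ then gives $Z_{2,N_m-n}/w(N_m)>2z(1)$ for every $n\in\{0,\dots,N_m\}$, which is exactly the assertion (in fact no genuine subsequence extraction is needed: one may take $\{N_\ell\}$ to be the tail of the sequence from Claim~\ref{statement: ratio upper bound}). Two facts will be used. First, the sharper form of Claim~\ref{statement: ratio upper bound} visible from its proof: $N_m$ minimizes $n\,w(n)$ over $1\leq n\leq m$, hence $n\,w(n)\geq N_m w(N_m)$, i.e.\ $w(n)/w(N_m)\geq N_m/n\geq 1$, for all $1\leq n\leq N_m$. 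Second, since $\sum_n n\,w(n)<\infty$, also $z(1)=\sum_n w(n)<\infty$, $\rho_c\in(0,\infty)$ (as $w(n)>0$ for all $n$), and $w(n)\to 0$; in particular $w(N_m)\to 0$.

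Fix $j$ with $3\leq j\leq N_m$ and set $K:=\lfloor (j-1)/2\rfloor\geq 1$. Splitting the convolution over the two ``ends'' $\{0,\dots,K\}$ and $\{j-K,\dots,j\}$ — disjoint subsets of $\{0,\dots,j\}$ since $2K\leq j-1<j$ — and using the symmetry $k\mapsto j-k$, one gets $Z_{2,j}\geq 2\sum_{k=0}^{K}w(k)w(j-k)$. For $0\leq k\leq K$ we have $1\leq j-k\leq N_m$, so the sharp ratio bound gives $w(j-k)\geq w(N_m)\,\tfrac{N_m}{j-k}\geq w(N_m)\bigl(1+\tfrac{k}{j}\bigr)$, using $N_m\geq j$ and $j-k\leq j$. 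Therefore
\begin{align*}
\frac{Z_{2,j}}{w(N_m)}\;&\geq\;2\sum_{k=0}^{K}w(k)\Bigl(1+\frac{k}{j}\Bigr)\\
&=\;2z(1)-2\sum_{k>K}w(k)+\frac{2}{j}\sum_{k=0}^{K}k\,w(k)\,.
\end{align*}
Now $\sum_{k=0}^{K}k\,w(k)\to\rho_c z(1)\in(0,\infty)$, so this sum exceeds $\rho_c z(1)/2$ once $K\geq K_\ast$, while $\sum_{k>K}w(k)\leq\tfrac1K\sum_{k>K}k\,w(k)=\varepsilon_K/K$ with $\varepsilon_K:=\sum_{k>K}k\,w(k)\to 0$. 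Since $K\geq (j-2)/2$, provided $K\geq K_\ast$ this yields
\begin{align*}
\frac{Z_{2,j}}{w(N_m)}-2z(1)\;\geq\;\frac{\rho_c z(1)}{j}-\frac{4\,\varepsilon_K}{j-2}\,,
\end{align*}
and the right side is positive as soon as $\varepsilon_K<\rho_c z(1)(j-2)/(4j)$; since $\rho_c z(1)(j-2)/(4j)\to\rho_c z(1)/4>0$ while $\varepsilon_K\to 0$ as $j\to\infty$, there is a threshold $j_0$ (depending only on $w$, not on $m$) such that $Z_{2,j}/w(N_m)>2z(1)$ for every $j$ with $j_0\leq j\leq N_m$.

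For the finitely many remaining values $j\in\{0,\dots,j_0-1\}$ I would use $Z_{2,j}\geq c_0:=\min_{0\leq j<j_0}Z_{2,j}>0$ together with $w(N_m)\to 0$: for $m$ large, $2z(1)w(N_m)<c_0\leq Z_{2,j}$. Combining the two ranges, there is $m_0$ so that for all $m\geq m_0$ (for which also $N_m\geq j_0$) one has $Z_{2,j}>2z(1)w(N_m)$ for every $j\in\{0,\dots,N_m\}$; substituting $j=N_m-n$ finishes the proof. I expect the main obstacle to be obtaining the \emph{strict} inequality uniformly in $j$: the crude estimate $Z_{2,j}/w(N_m)\geq 2\sum_{k\leq K}w(k)$ only approaches $2z(1)$ from below, so one must genuinely combine the minimality defining $N_m$ (the extra factor $1+k/j$) with finiteness of the first moment (which forces the tail $\sum_{k>K}w(k)$ to be $o(1/K)$, hence dominated by the gain $\rho_c z(1)/j$). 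The convolution splitting, the choice of $j_0$, and the finite‑set estimate are all routine.
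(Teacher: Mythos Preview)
Your proof is correct, and it differs from the paper's in a way worth noting. The paper argues indirectly: it introduces the minimizer $K_m:=\arg\min_{k\leq N_m}Z_{2,k}$, extracts a subsequence along which $K_\ell/N_\ell\to r$, and then invokes Proposition~\ref{2tol} (the sub-exponential convergence $Z_{2,N}/w(N)\to 2z(1)$) to force $r=1$ before running the first-moment estimate at the single point $j=K_\ell$. You instead exploit the full strength of the minimality defining $N_m$, namely $w(n)/w(N_m)\geq N_m/n$ for every $1\leq n\leq N_m$, to obtain the lower bound $Z_{2,j}/w(N_m)\geq 2z(1)-2\varepsilon_K/K+\rho_c z(1)/j$ \emph{uniformly} in $j\in[j_0,N_m]$, with the small-$j$ range handled by $w(N_m)\to 0$. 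This buys you two things: no appeal to Proposition~\ref{2tol} is needed (so the claim holds under the stated finite-first-moment hypothesis alone, without the ambient sub-exponentiality), and no genuine subsequence extraction is required, since the strict inequality holds for the entire tail $m\geq m_0$. The paper's route has the advantage that the same $K_m$ machinery is reused verbatim in the inductive step to general $L$; your approach would need a (straightforward) reformulation there.
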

\begin{proof}
By neglecting at most a single term in the sum defining $Z_{2,N}$, the ratio $\frac{Z_{2,N}}{w(N)}$ can be bounded below as follows,
\begin{align}
\label{eq:Zbnd}
\frac{Z_{2,N}}{w(N)} = \sum_{n=0}^{N} w(n) \frac{w(N-n)}{w(N)} \geq 2 \sum_{n=0}^{\lfloor N/2 \rfloor -1}w(n) \frac{w(N-n)}{w(N)} \textrm{.}
\end{align}
We define
\begin{equation*}
K_m :=\max\Big\{ k^* \leq N_m :Z_{2,k^*}=\min_{0\leq k\leq N_m} Z_{2,k} \Big\}
\end{equation*}
to be the largest index where the ratio $\frac{Z_{2,k}}{w(N_m )}$ is minimized.
In particular 
\begin{align}
\label{eq:1}
 \frac{Z_{2,N_m - n}}{w(N_m )} \geq \frac{Z_{2,K_m}}{w(N_m )} \quad \textrm{for all} \quad  m \geq 0 \ \textrm{ and } \ n \in \{0,\ldots,N_m\}. 
\end{align}
By definition $K_m \leq N_m$, and so $r:= \limsup_{m\to\infty} K_m / N_m \leq 1$.
There exists a subsequence $(m_\ell :\ell \geq 0)$ such that $K_{m_\ell} / N_{m_\ell} \to r$, with a slight abuse of notation we denote the subsequences $N_{m_\ell}$ and $K_{m_\ell}$ simply by $N_\ell$ and $K_\ell$.
Suppose $r < 1$, by Claim \ref{statement: ratio upper bound} we have
\begin{align*}
\frac{Z_{2,K_\ell}}{w(N_\ell)} \geq \frac{Z_{2,K_\ell}}{w(K_\ell)}\left(2 - \frac{K_\ell}{N_\ell}\right) \to 2z(1) (2-r)> 2z(1) \,,
\end{align*}
which together with \eqref{eq:1} contradicts Proposition \ref{2tol},
therefore $K_{\ell} / N_{\ell} \to 1$ and $K_\ell / N_\ell \leq 1$ for all $\ell$. 

Applying Claim \ref{statement: ratio upper bound} we then have 
\begin{align*}
\frac{Z_{2,K_\ell}}{w(N_\ell)} \geq 2 \sum_{k=0}^{\lfloor K_\ell /2\rfloor -1 }{w(k)\frac{w(K_\ell -k)}{w(N_\ell )}} \geq 
\frac{2}{N_\ell}\sum_{k=0}^{\lfloor K_\ell /2\rfloor - 1}k\, w(k)+2\Big( 2-\frac{K_\ell}{N_\ell}\Big)\sum_{k=0}^{\lfloor K_\ell /2\rfloor - 1} w(k)\ .
\end{align*}
Subtracting $2z(1)$ we get
\begin{align}
\label{eq:2}
\frac{Z_{2,K_\ell}}{w(N_\ell)} &- 2z(1) \nonumber\\ & \geq\frac{2}{N_\ell}\sum_{k=0}^{\lfloor K_\ell /2\rfloor-1} k\,w(k) -2\sum_{k= \lfloor K_\ell /2 \rfloor}^\infty w(k)+2\Big( 1-\frac{K_\ell}{N_\ell}\Big)\sum_{k=0}^{\lfloor K_\ell /2\rfloor-1} w(k)\ .
\end{align}
Neglecting the final term in \eqref{eq:2} we have
\begin{align*}
  N_\ell\bigg(\frac{Z_{2,K_\ell}}{w(N_\ell)} &-2z(1)\bigg) > 2\sum_{k=0}^{\lfloor K_\ell /2\rfloor-1}k\, w(k) -2 N_\ell\sum_{k= \lfloor K_\ell /2 \rfloor}^\infty w(k)\\
&> 2\sum_{k=0}^{\lfloor K_\ell /2\rfloor-1}k\, w(k) -\frac{4 N_\ell}{K_\ell}\sum_{k= \lfloor K_\ell /2 \rfloor}^\infty k w(k)\to 2 \rho_c z(1) > 0\, ,
\end{align*}
as $\ell \to \infty$, where $\rho_c$ is the critical density defined in \eqref{fmean}. Together with \eqref{eq:1} this completes the proof of Claim \ref{statement: 2 site limit above}.
%
\end{proof}

To complete the proof of Lemma \ref{res2} we proceed by induction on the system size, $L$.
We make the following inductive hypothesis;  
\begin{enumerate}
\item[(H)]\label{H1} there exists a sequence $\{ N_{m}: m \in \mathbb{N}\}$ such that $\frac{Z_{L,N_{m} - n}}{w(N_{m})}>L z(1)^{L-1}$ for all $n \in \{0,\ldots , N_{m} \}$ and $m$ sufficiently large.
\end{enumerate}
The case $L=2$ is given by Claim \ref{statement: 2 site limit above}.
Analogously to the proof of Claim \ref{statement: 2 site limit above} we define
\begin{equation}
K_m :=\max\Big\{ k^* \leq N_m :Z_{L,k^*}=\min_{0\leq k\leq N_m} Z_{L,k} \Big\}.
\label{kmdef}
\end{equation}
By the same argument as in the proof of Claim \ref{statement: 2 site limit above}
there exists a  subsequence $(m_\ell :\ell \geq 0)$ such that $K_{m_\ell} / N_{m_\ell} \to 1$, again we denote the respective subsequences by $K_\ell$ and $N_\ell$.
For $\ell$ sufficiently large, we have
\begin{align*}
\frac{Z_{L+1,K_{\ell}}}{w(N_{\ell})} =& \sum_{k=0}^{K_{\ell}}{w(k)\frac{Z_{L,K_{\ell}-k}}{w(N_{\ell})}} 
=  \sum_{k=0}^{\lfloor K_{\ell}/2\rfloor}{w(k) \frac{Z_{L,K_{\ell}-k}}{w(N_{\ell})}} + \sum_{k=0}^{\lfloor K_{\ell}/2\rfloor-1}{Z_{L,k} \frac{w(K_{\ell}-k)}{w(N_{\ell})}} \\
&> L\, z(1)^{L-1} \sum_{k=0}^{\lfloor K_{\ell}/2\rfloor}{w(k)} + \sum_{k=0}^{\lfloor K_{\ell}/2\rfloor-1}{Z_{L,k} \bigg(1+\frac{N_{\ell}-K_{\ell}+k}{N_\ell}\bigg)} \textrm{,}
\end{align*}
where the final inequality follows from the inductive hypothesis (H) and Claim \ref{statement: ratio upper bound}.
Subtracting $(L+1) z(1)^{L}$ we get 
\begin{align}\label{arghh}
\frac{Z_{L+1,K_{\ell}}}{w(N_{\ell})} - (L + 1) z(1)^{L} >& -L\,z(1)^{L-1} \sum_{k=\lfloor K_{\ell} /2\rfloor +1}^{\infty}{w(k)}  \nonumber\\
&+ \frac{1}{N_{\ell}}\sum_{k=0}^{\lfloor {K_{\ell}}/{2}\rfloor-1}{k\,Z_{L,k}} + \bigg(1-\frac{K_{\ell}}{N_{\ell}} \bigg)\sum_{k=0}^{\lfloor{K_{\ell}}/{2}\rfloor-1}{Z_{L,k}} \nonumber\\
&+ \sum_{k=0}^{\lfloor{K_{\ell}}/{2}\rfloor-1}{Z_{L,k}}- z(1)^{L} \ .
\end{align}
Now, following the proof of Claim \ref{statement: 2 site limit above}, multiply (\ref{arghh}) by $N_{\ell}$ and neglect the second term on the second line.
Then the first term vanishes, since
\begin{align*}
0 \leq N_{\ell}\sum_{k=\lfloor K_{\ell}/2\rfloor +1}^{\infty}{w(k)} \leq \frac{2N_\ell}{K_\ell}\sum_{k=\lfloor K_{\ell}/2\rfloor +1}^{\infty}{k\,w(k)} \to 0 \quad\textrm{as } \ell\to \infty\ .
\end{align*}
In terms of the normalized grand-canonical measure $Z_{L,k}=z(1)^{L} \nu_1^{L} [S_{L} =k]$, so we have
\begin{equation}
\label{eq:3}
\sum_{k=0}^\infty k\, Z_{L,k} =z(1)^{L} \nu^{L}_{1} (S_{L} )=\rho_c L z(1)^{L}  \in (0,\infty )\ ,
\end{equation}
where $\rho_c$ is the critical density as defined in (\ref{fmean}). This implies that the first term in the second line of (\ref{arghh}), after multiplication with $N_\ell$, converges to a strictly positive constant. 
Finally, the third line in (\ref{arghh}) converges to zero after multiplying by $N_\ell$ since we have $\sum_{n=0}^{\infty}{Z_{L,n}} = z(1)^{L}$, which implies
\begin{align*}
0 \geq N_{\ell}\bigg(\sum_{k=0}^{\lfloor{K_{\ell}}/{2}\rfloor -1}{Z_{L,k}} - z(1)^{L} \bigg) = -N_\ell \sum_{k=\lfloor{K_{\ell}}/{2}\rfloor}^{\infty}{Z_{L,k}} \geq -\frac{N_\ell}{\lfloor{K_{\ell}}/{2}\rfloor} \sum_{k=\lfloor{K_{\ell}}/{2}\rfloor}^{\infty}k{Z_{L,k}}\to 0 
\end{align*}
as $\ell \to \infty$, by \eqref{eq:3}.
Using the definition of $K_\ell$ in (\ref{kmdef}), this implies that there exists a constant $C>0$ such that for all $\ell$ large enough
\begin{align*}
\min_{n\leq N_\ell}\bigg( \frac{Z_{L+1,n}}{w(N_{\ell})} - (L+1) z(1)^{L}  \bigg) \geq C/N_\ell \ ,
\end{align*}
so (H) holds for $L+1$, completing the induction. This concludes the proof of Lemma \ref{res2} for the case where the critical measure has finite mean.


\subsection{Proof of Lemma \ref{res2}: The infinite mean power law case}
\label{sec:power-law}

We consider stationary weights of the form $w(n) = n^{-b}h(n)$ with $w(0) = 1$, $h(n) \to c \in (0,\infty)$, and $b \in (1,2)$. We prove non-monotonicity of $Z_{L,N}/w(N)$ for $b \in (3/2,2)$ and $h(n) = 1$ for all $n \in \bbN$ via an exact computation. The case $b=2$ can be done completely analogously but involves different expressions with logarithms in the resulting limits, and is presented in Appendix \ref{bis2}. The proof remains valid for general converging $h(n)$ with only minor differences, which we explain in a remark at the end of this section. Convergence of $Z_{2,N}/w(N) \to 2z(1)$ from above or below for the exact power law depends on the parameter $b \in (1,2)$, as summarized in the next result.

\begin{lemma}
\label{lemma: infinite mean}
For stationary weights of the form $w(n) = n^{-b}$ and $w(0) = 1$ with $b \in (1,2)$
\begin{align}
\label{eqn: b less than}
N^{b-1}\bigg(\frac{Z_{2,N}}{w(N)} - 2z(1) \bigg) \to F_{2}(b)
\textrm{ as } N \to \infty \ ,
\end{align}
where
\begin{equation}
\nonumber
F_{2}(b) = 2\sum_{i=1}^{\infty}{\frac{1}{i!}\prod_{j=0}^{i-1}{\big(j+b\big)}\frac{2^{b-1-i}}{1-b+i} } - 2\frac{2^{b-1}}{b-1} \begin{cases} >0 &\textrm{ if } b\in (\frac{3}{2},2) \\  <0 &\textrm{ if } b\in (1,\frac{3}{2}) \end{cases} \ .
\end{equation}
For $L>2$ we have
\begin{align}
\label{llarger}
\lim_{N \to \infty}N^{b-1}\bigg(\frac{Z_{L,N}}{w(N)} - Lz(1)^{L-1} \bigg) = F_{L}(b) := z(1)F_{L-1}(b) + (L-1)z(1)^{L-2}F_{2}(b) \ ,
\end{align}
which has the same sign as $F_{2}(b)$.
\end{lemma}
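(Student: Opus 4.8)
The plan is threefold: prove \eqref{eqn: b less than} directly from the convolution $Z_{2,N}=(w*w)(N)$; obtain \eqref{llarger} by induction on $L$ using $Z_{L,N}=\sum_{k=0}^{N}Z_{L-1,k}\,w(N-k)$; and determine the sign of $F_{2}$, which propagates to every $F_{L}$ through the recursion in \eqref{llarger} since $z(1)>0$. Write $(b)_{i}:=\prod_{j=0}^{i-1}(j+b)$, so $(1-x)^{-b}=\sum_{i\geq0}\frac{(b)_i}{i!}x^{i}$ for $|x|<1$. For $L=2$: peeling off the two boundary terms $k=0,N$ in $Z_{2,N}=\sum_{k}w(k)w(N-k)$ and using $w(0)=1$, $w(k)=k^{-b}$ gives $\frac{Z_{2,N}}{w(N)}=2+\sum_{k=1}^{N-1}k^{-b}(1-k/N)^{-b}$; since $2z(1)=2+2\zeta(b)$ and the remaining sum is invariant under $k\mapsto N-k$, the difference $\frac{Z_{2,N}}{w(N)}-2z(1)$ equals, up to an error $O(N^{-b})$, $2\sum_{k=1}^{\lfloor N/2\rfloor}k^{-b}\big((1-k/N)^{-b}-1\big)-2\sum_{k>\lfloor N/2\rfloor}k^{-b}$. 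The tail obeys $N^{b-1}\sum_{k>\lfloor N/2\rfloor}k^{-b}\to\frac{2^{b-1}}{b-1}$; for the main sum I would expand $(1-k/N)^{-b}-1=\sum_{i\geq1}\frac{(b)_i}{i!}(k/N)^{i}$ (legitimate termwise, since $k/N<1/2$ over the finite range of $k$), interchange, and use $\sum_{k=1}^{\lfloor N/2\rfloor}k^{i-b}\sim\frac{(\lfloor N/2\rfloor)^{i-b+1}}{i-b+1}$ (valid as $i-b>-1$ for $i\geq1$, $b<2$), so that the $i$-th resulting term times $N^{b-1}$ tends to $\frac{(b)_i}{i!}\frac{2^{b-1-i}}{i-b+1}$; summing gives $N^{b-1}\big(\frac{Z_{2,N}}{w(N)}-2z(1)\big)\to 2\sum_{i\geq1}\frac{(b)_i}{i!}\frac{2^{b-1-i}}{i-b+1}-\frac{2^{b}}{b-1}=F_{2}(b)$, the asserted formula.

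For $L>2$, split $\frac{Z_{L,N}}{w(N)}=\sum_{k=0}^{\lfloor N/2\rfloor}Z_{L-1,k}(1-k/N)^{-b}+\sum_{j=0}^{\lceil N/2\rceil-1}\frac{Z_{L-1,N-j}}{w(N-j)}\,w(j)(1-j/N)^{-b}$, using $w(N-k)/w(N)=(1-k/N)^{-b}$ in the first block and $w(j)/w(N)=N^{b}w(j)$ together with $w(N-j)=N^{-b}(1-j/N)^{-b}$ in the second. The first block tends to $\sum_{k}Z_{L-1,k}=z(1)^{L-1}$; for its $N^{1-b}$-correction, decompose $Z_{L-1,k}=(L-1)z(1)^{L-2}w(k)+r(k)$, where the inductive hypothesis yields $r(k)=O(k^{1-2b})$ so $\sum_{k}|r(k)|<\infty$, whence the $r$-part is $o(N^{1-b})$ and the $w$-part reproduces $(L-1)z(1)^{L-2}$ times half the $L=2$ computation, contributing $(L-1)z(1)^{L-2}\tfrac12 F_{2}(b)$. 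In the second block $\sum_{j}w(j)(1-j/N)^{-b}\to z(1)$ with the same $N^{1-b}$-correction $\tfrac12 F_{2}(b)$, while the inductive asymptotics $\frac{Z_{L-1,N-j}}{w(N-j)}=(L-1)z(1)^{L-2}+F_{L-1}(b)(N-j)^{1-b}+o\big((N-j)^{1-b}\big)$, uniform for $j\leq N/2$ (where $N-j\geq N/2\to\infty$), combined with $(1-j/N)^{1-2b}\to1$ bounded by $2^{2b-1}$ on this range, produce via dominated convergence in $j$ the extra contribution $z(1)F_{L-1}(b)$. Summing the pieces gives exactly $F_{L}(b)=z(1)F_{L-1}(b)+(L-1)z(1)^{L-2}F_{2}(b)$, and by induction (base $L=2$) $F_{L}(b)$ carries the sign of $F_{2}(b)$ since $z(1),\,L-1>0$.

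To fix the sign of $F_{2}$, rewrite $F_{2}(b)=2\int_{0}^{1/2}s^{-b}\big((1-s)^{-b}-1\big)\,ds-\frac{2^{b}}{b-1}$ (using $\int_{0}^{1/2}s^{i-b}\,ds=\frac{2^{b-1-i}}{i-b+1}$ for $i\geq1$). The elementary primitive $\frac{d}{ds}\frac{2(2s-1)}{\sqrt{s(1-s)}}=s^{-3/2}(1-s)^{-3/2}$ gives $\int_{0}^{1/2}s^{-3/2}\big((1-s)^{-3/2}-1\big)\,ds=2\sqrt{2}$, hence $F_{2}(3/2)=0$. Differentiating under the integral, the integrand $(s(1-s))^{-b}-s^{-b}$ has $b$-derivative $s^{-b}\big[(-\ln s)\big((1-s)^{-b}-1\big)+(-\ln(1-s))(1-s)^{-b}\big]>0$ on $(0,1/2)$, and $-\frac{2^{b}}{b-1}$ is increasing on $(1,2)$ (its derivative there is $-\frac{2^{b}\big((b-1)\ln2-1\big)}{(b-1)^{2}}>0$), so $F_{2}$ is strictly increasing on $(1,2)$; together with $F_{2}(3/2)=0$ this gives $F_{2}(b)<0$ for $b\in(1,\tfrac32)$ and $F_{2}(b)>0$ for $b\in(\tfrac32,2)$.

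The delicate step is the exchange of $\lim_{N\to\infty}$ with the infinite $i$-sum in the base case: one must rule out mass escaping to $i=\infty$, so that each $i$ contributes exactly $\frac{(b)_i}{i!}\frac{2^{b-1-i}}{i-b+1}$. I would control this with a uniform tail bound, $\sum_{i>I}\frac{(b)_i}{i!N^{i}}\sum_{k\leq N/2}k^{i-b}\leq(N/2)^{1-b}\sum_{i>I}\frac{(b)_i}{i!}2^{-i}$, whose $N^{b-1}$-multiple is $\leq 2^{b-1}(1+o(1))$ times a tail of the convergent series $\sum_{i}\frac{(b)_i}{i!}2^{-i}=2^{b}$, uniformly in $N$; one then passes to the limit on the block $i\leq I$ first and lets $I\to\infty$. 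The $L>2$ induction is then essentially bookkeeping, provided the inductive hypothesis is read with the $o((N-j)^{1-b})$ error uniform over $N-j\geq N/2$, which follows automatically from convergence of $N^{b-1}\big(Z_{L-1,N}/w(N)-(L-1)z(1)^{L-2}\big)$.
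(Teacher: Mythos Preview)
Your argument is correct and, for the convergence statements \eqref{eqn: b less than} and \eqref{llarger}, follows essentially the same route as the paper: Taylor-expand $(1-k/N)^{-b}$, split the convolution symmetrically at $N/2$, and use integral asymptotics for the resulting sums $\sum_{k\le N/2} k^{i-b}$; the induction on $L$ likewise mirrors the paper's decomposition into two blocks (the paper's $\Xi_{L,N}$ and $\Theta_{L,N}$). Your device of writing $Z_{L-1,k}=(L-1)z(1)^{L-2}w(k)+r(k)$ with $r(k)=O(k^{1-2b})$ is a slightly tidier way of isolating the piece that reproduces the $L=2$ computation, and your explicit tail bound justifying the $i$-sum/limit interchange is a point the paper leaves implicit.

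The genuine departure is in establishing the sign of $F_{2}(b)$. The paper (Appendix~\ref{appendixSign}) rewrites $F_{2}(b)$ as a ${}_{2}F_{1}$ hypergeometric value, applies Euler's transform and Gauss's second summation theorem to obtain the closed form
\[
F_{2}(b)=-\frac{\sqrt{\pi}\,2^{2b-1}\,\Gamma(2-b)}{(b-1)\,\Gamma(3/2-b)}\,,
\]
and then reads off the sign from the sign of $\Gamma(3/2-b)$ on $(1,2)$. Your route is more elementary and self-contained: the integral representation $F_{2}(b)=2\int_{0}^{1/2}s^{-b}\big((1-s)^{-b}-1\big)\,ds-2^{b}/(b-1)$, an explicit antiderivative at $b=3/2$ giving $F_{2}(3/2)=0$, and strict monotonicity in $b$ by direct differentiation under the integral. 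This avoids special-function identities entirely; the trade-off is that the paper's closed form delivers exact values of $F_{2}(b)$ rather than just its sign.
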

This result implies that whenever $w(n) = n^{-b}$ for $n \geq 1$ and $w(0) = 1$ with $b \in (3/2,2)$ Lemma \ref{res2} holds with $C=F_2(b)$.
This completes the proof of Lemma \ref{res2} in the case $h(n)=1$. 

\begin{proof}[Proof of Lemma \ref{lemma: infinite mean}]
To prove this result we make use of the full Taylor series of $(1-x)^{-b}$ at $x=0$ and integral approximations to compute the asymptotic behaviour of summations. To simplify notation we assume that $N$ is even. For odd $N$ there is no term with multiplicity one and there exists an obvious modification. First note that $w(n)$ fulfils the regularity assumption \eqref{wlog} and $Z_{L,N}/w(N) \to L z(1)^{L-1}$ as $N\to \infty$ for all $L \geq 2$ \cite{ferrarietal07}, so by Proposition \ref{prop1} a process with stationary measures $\pi_{L,N}$ will exhibit condensation. For $L=2$ we subtract $2z(1)$ from $Z_{2,N}/w(N)$ to get
\begin{align}
\nonumber
\frac{Z_{2,N}}{w(N)} - 2z(1) &= 2\sum_{n=0}^{N/2}{w(n) \frac{w(N-n)}{w(N)}} - 2 \sum_{n=0}^{\infty}{w(n)} - \frac{w(N/2)w(N/2)}{w(N)}  \\
\label{eq:first}
&=2\sum_{n=0}^{N/2}{n^{-b} \bigg(1-\frac{n}{N} \bigg)^{-b}} - 2 \sum_{n=0}^{\infty}{n^{-b}} - 2^{2b}N^{-b}\ .
\end{align}
Substituting the Taylor expansion of $(1-x)^{-b}$ we find
\begin{align}
\nonumber
\frac{Z_{2,N}}{w(N)} - 2z(1) &= 2\sum_{n=0}^{N/2} n^{-b}\sum_{i=0}^{\infty} \frac{1}{i!} \bigg(\frac{n}{N}\bigg)^i \prod_{j=0}^{i-1} (j+b) - 2 \sum_{n=1}^{\infty}{n^{-b}}- 2^{2b}N^{-b}  \\
\label{powerlaw difference}
&=2\sum_{i=1}^{\infty}{\frac{1}{i!}\prod_{j=0}^{i-1}{\big(j+b\big)} \frac{1}{N^{i}}\sum_{n=1}^{N/2}{n^{-b+i}}} - 2 \sum_{n=N/2+1}^{\infty}{n^{-b}} - 2^{2b}N^{-b}\ .
\end{align}
In the last line the $i=0$ term was combined with the second term, and we adopt the usual convention that empty products are equal to one. Both summations in $n$ are over continuous and monotone functions $g:\bbR\to (0,\infty )$, therefore we can use the usual integral approximation for decreasing (increasing) functions 
\begin{equation}
\label{eq:intbounds}
\int_c^{d+1} g(x)\, dx\leq (\geq) \sum_{n=c}^d g(n)\leq (\geq) g(c)+\int_c^d g(x)\, dx
\end{equation} 
for all $c\in\N$ and $d\in\N\cup\{\infty\}$. Multiplying by $N^{b-1}$ we find the limit as $N \to \infty$ of (\ref{powerlaw difference}) to be
\begin{align}
F_{2}(b) = 2\sum_{i=1}^{\infty}{\frac{1}{i!}\prod_{j=0}^{i-1}{\big(j+b\big)}\frac{2^{b-1-i}}{1-b+i} } - 2\frac{2^{b-1}}{b-1}\ .
\end{align}
It is shown in Appendix \ref{appendixSign} that this is positive (and finite) in the region $b \in (3/2,2)$ and negative (and finite) in the region $b \in (1,3/2)$, completing the proof of Lemma \ref{lemma: infinite mean} for $L = 2$. The result holds for general system size, $L \geq 2$, and is proved by induction. The inductive hypothesis states
\begin{equation}
\label{eq:induction hyp}
\lim_{N \to \infty}N^{b-1}\bigg(\frac{Z_{L,N}}{w(N)} - L z(1)^{L-1}\bigg) = F_{L}(b) = z(1)F_{L-1}(b) + (L-1)z(1)^{L-2}F_{2}(b) \ .
\end{equation}

Similar to the case $L=2$ we write
\begin{align}
\nonumber
N^{b-1}\bigg(\frac{Z_{L+1,N}}{w(N)} &- (L+1)z(1)^{L} \bigg)\\
= \underbrace{N^{b-1}\bigg(\sum_{n=0}^{N/2}{Z_{L,n}\frac{w(N-n)}{w(N)}}-z(1)^{L}\bigg)}_{\Xi_{L,N}} +& \underbrace{N^{b-1}\bigg(\sum_{n=0}^{N/2-1}{w(n)\frac{Z_{L,N-n}}{w(N)}}-Lz(1)^{L} \bigg)}_{\Theta_{L,N}}\ .
\label{chuck}
\end{align}
We first establish the limit of the function $\Theta_{L,N}$ in equation \eqref{chuck}. The inductive hypothesis \eqref{eq:induction hyp} can be written as
\begin{equation}
\label{eq:induction2}
\frac{Z_{L,n}}{w(n)} = \frac{F_{L}(b)+o_{n}(1)}{n^{b-1}} + Lz(1)^{L-1} \ ,
\end{equation} 
which implies $\Theta_{L,N}$ can be written as
\begin{align*}
\Theta_{L,N} &= N^{b-1}\left(\sum_{n=0}^{N/2-1}w(n) \frac{w(N-n)}{w(N)}\frac{Z_{L,N-n}}{w(N-n)} - L z(1)^{L}\right) \\
&= N^{b-1}\left(\sum_{n=0}^{N/2-1}w(n) \frac{w(N-n)}{w(N)}\left[ \frac{F_{L}(b)+o_{N}(1)}{(N-n)^{b-1}} + Lz(1)^{L-1}\right] - L z(1)^{L} \right) \ .
\end{align*}
Rearranging terms and noting that $\frac{w(N-n)}{w(N)}\frac{N^{b-1}}{(N-n)^{b-1}}=\left(\frac{N-n}{N}\right)^{1-2b}$ we then have
\begin{align}
\nonumber
\Theta_{L,N}=&(F_{L}(b) + o_{N}(1))\sum_{n=0}^{N/2-1}{w(n)\bigg(\frac{N-n}{N}\bigg)^{1-2b}} \\
\nonumber
 &+Lz(1)^{L-1}N^{b-1}\bigg( \sum_{n=0}^{N/2-1}{w(n)\frac{w(N-n)}{w(N)}} - z(1)\bigg) \ .
\end{align}
After Taylor expanding $(1-x)^{1-2b}$ appearing in the first line above, it is easy to see that the limit of the first line is given by $F_{L}(b) z(1)$ as $N\to \infty$. Using the $L=2$ result to calculate the limit of the second line we find
\begin{equation}
\label{eq:2ndbra}
\Theta_{L,N}\to F_{L}(b) z(1) + \frac{L z(1)^{L-1}F_{2}(b)}{2} \textrm{ as }N \to \infty \ .
\end{equation}
To identify the limit of $\Xi_{L,N}$ in \eqref{chuck}, we again make use of the Taylor expansion of $(1-x)^{-b}$ similarly to the two site case and we write
\begin{align}
\nonumber
\Xi_{L,N}=N^{b-1}\bigg( \sum_{n=0}^{N/2}{Z_{L,n} \sum_{i=0}^{\infty}{\frac{1}{i!}\prod_{j=0}^{i-1}\big(j+b\big) \bigg(\frac{n}{N}\bigg)^{i}}}-z(1)^{L} \bigg)\ .
\end{align}
Changing the order of summations, separating the $i = 0$ term 
and using $\sum_{n=0}^{\infty}{Z_{L,n}} = z(1)^{L}$ we have
\begin{align}
\label{eq:xi}
\Xi_{L,N}=N^{b-1}\bigg(\sum_{i=1}^{\infty}{\frac{1}{i!}\prod_{j=0}^{i-1}\big(j+b\big)\frac{1}{N^{i}} \sum_{n=1}^{N/2}{n^{i}Z_{L,n}}} - \sum_{n=N/2+1}^{\infty}{Z_{L,n}} \bigg)\ .
\end{align}
For all $i \geq 1$ and $b \in (1,2)$ we have $N^{b-1-i} \to 0$ as $N \to \infty$, which implies that for any fixed $N_{1} \in \bbN$ we have $N^{b-1-i}\sum_{n=1}^{N_{1}-1}{n^i Z_{L,n}} \to 0$. Therefore, the following limits are equal
\begin{align*}
\nonumber
\lim_{N \to \infty}\Xi_{L,N}= \lim_{N \to \infty}N^{b-1}\bigg(\sum_{i=1}^{\infty}{\frac{1}{i!}\prod_{j=0}^{i-1}\big(j+b\big)\frac{1}{N^{i}} \sum_{n=N_{1}}^{N/2}{n^{i}Z_{L,n}}} - \sum_{n=N/2+1}^{\infty}{Z_{L,n}} \bigg)\ .
\end{align*}
 Using the inductive hypothesis \eqref{eq:induction2} we have $\lim_{N \to \infty}\Xi_{L,N}$ is given by
\begin{align}
\nonumber
\lim_{N \to \infty}N^{b-1}(F_{L}(b)+o_{N}(1))&\bigg(\sum_{i=1}^{\infty}{\frac{1}{i!}\prod_{j=0}^{i-1}\big(j+b\big)\frac{1}{N^{i}} \sum_{n=N_{1}}^{N/2}{n^{i}\frac{w(n)}{n^{b-1}}}} - \sum_{n=N/2+1}^{\infty}{\frac{w(n)}{n^{b-1}}} \bigg)\\
+ \lim_{N \to \infty}N^{b-1}Lz(1)^{L-1}&\bigg(\sum_{i=1}^{\infty}{\frac{1}{i!}\prod_{j=0}^{i-1}\big(j+b\big)\frac{1}{N^{i}} \sum_{n=N_{1}}^{N/2}{n^{i}w(n)}} - \sum_{n=N/2+1}^{\infty}{w(n)} \bigg)\ .
\label{eq:xilim}
\end{align} 
Now applying the $L=2$ result it is possible to show that
\begin{equation}
\label{eq:1stbra}
\Xi_{L,N} \to \frac{L z(1)^{L-1}F_{2}(b)}{2} \ ,
\end{equation}
where the limit of the first line of \eqref{eq:xilim} was $0$ by the additional factor $1/n^{b-1}$ appearing in the summations.
Combining \eqref{eq:2ndbra} and \eqref{eq:1stbra} we have
\begin{equation}
\nonumber
N^{b-1}\bigg(\frac{Z_{L+1,N}}{w(N)} - (L+1)z(1)^{L} \bigg) \to z(1)F_{L}(b)+L z(1)^{L-1}F_{2}(b) \textrm{ as }N \to \infty \ ,
\end{equation}
concluding the induction so the result holds for all $L \geq 2$. From the recursion \eqref{llarger} it is obvious that $F_{L}(b)$ will have the same sign as $F_{2}(b)$, completing the proof of Lemma \ref{lemma: infinite mean}.
\end{proof}

A slightly modified version of Lemma \ref{lemma: infinite mean} also holds if the stationary weights are of the form $w(n) = n^{-b} h(n)$ where  $\lim_{n \to \infty}{h(n)} = c \in (0, \infty)$. The limit in \eqref{eqn: b less than} only depends on the tail behaviour of the weights and is now given by $c F_{2}(b)$. Briefly, this can be seen as follows, \eqref{eq:first} becomes
\begin{equation}
\nonumber
2\sum_{n=0}^{N/2}{n^{-b}h(n)\frac{h(N-n)}{h(N)} \bigg(1-\frac{n}{N} \bigg)^{-b}} - 2 \sum_{n=0}^{\infty}{n^{-b}h(n)} + 2^{2b}N^{-b} \frac{h(N/2)h(N/2)}{h(N)}\ .
\end{equation}
Taylor expanding $(1-x)^{-b}$ and rearranging to find terms of the form $N^{1-b-i}\sum_{n=1}^{N/2}{h(n)n^{-b+i}}$ and using the same argument to calculate the limit of $\Xi_{L,N}$ we have 
\begin{align}
\nonumber
\lim_{N\to \infty}N^{b-1-i}\sum_{n=1}^{N/2}{h(n) n^{-b+i}}
=\lim_{N \to \infty}{ N^{b-1-i}\sum_{n=N_{1}}^{N/2}{cn^{-b+i}}}  < \infty
\end{align}
for all $i \geq 1$ and any $N_{1} \in \bbN$, 
and the result follows. Similar modifications are required in the inductive step and the new limit in \eqref{llarger} is given by $c^{L-1}F_{L}(b)$ for all $L \geq 2$. This does not change the sign of the limit in \eqref{llarger} and therefore Lemma \ref{res2} still holds.\\

\section{Characterization of condensation}
\label{sec: char cond}

Condensation arises in spatially homogeneous systems with stationary product measures due to the sub-exponential tail of the stationary weights $w$, which has been studied extensively in previous work. 
In this section we review relevant results on heavy-tailed distributions and discuss the links between condensation on finite fixed lattices and in the thermodynamic limit before we give the proof of Proposition \ref{prop1} in Section \ref{proof equiv}.


\subsection{Sub-exponential distributions}
\label{sec:subexp}
Sub-exponential distributions are a special class of heavy-tailed distributions, the following characterization was introduced in \cite{Chistyakov1964} with applications to branching random walks, and has been studied systematically in later work (see e.g.\ \cite{J.Chover1973,Teugels1975,Pitman1980,Klppelberg1989}), for a review see for example \cite{CharlesM.Goldie} or \cite{Baltrunas2004}.

A non-negative random variable $X$ with distribution function $F(x)= \bbP [X\leq x]$ is called heavy-tailed if $F(0+)=0$, $F(x)<1$ for all $x>0$, and
\begin{equation}
e^{\l x}(1-F(x)) \to \infty\,\textrm{ as }x\to \infty \,\textrm{ for all }\l>0 \ .
\end{equation}
It is called sub-exponential if $F(0+)=0$, $F(x)<1$ for all $x>0$, and
\begin{equation}
\frac{1-F^{\star 2}(x)}{1-F(x)}\to 2\quad\mbox{as }x\to\infty\ .
\label{eq:subex}
\end{equation}
Here $F^{\star 2} (x)=\bbP [X_1 +X_2 \leq x]$ denotes the convolution product, the distribution function of the sum of two independent copies $X_1$ and $X_2$. It has been shown \cite{Chistyakov1964,Embrechts1980} that \eqref{eq:subex} is equivalent to either of the following conditions,
\begin{align}
\label{sube1}&\lim_{x \to \infty} \frac{1-F^{\star L}(x)}{1-F(x)} = L \textrm{\quad for all \quad} L \geq 2 \ ,\quad\mbox{or}&\\
\label{sube2}&\lim_{x \to \infty} \frac{\mathbb{P}\big[\sum_{i=1}^{L}{X_{i}}>x\big]}{\mathbb{P}\big[\max\{X_{i}: i \in \{1, \ldots, L \} \}>x\big]} = 1 \textrm{\quad for all \quad} L \geq 2  \ . & 
\end{align}
The second characterization shows that a large sum of independent sub-exponential random variables $X_i$ is typically realized by one of them taking a large value, which is of course reminiscent of the condensation phenomenon. 
It was further shown in \cite{Chistyakov1964,CharlesM.Goldie} that sub-exponential distributions also have the following properties, 
\begin{align}
\lim_{x \to \infty}{\frac{1-F(x-y)}{1-F(x)}} =1 &\quad \forall y \in \mathbb{R}, \\
\int_{0}^{\infty}{e^{\epsilon x} dF(x)} = \infty &\quad \forall \epsilon >0 \quad \textrm{(no exponential moments),}\\
F(x)e^{\epsilon x} \to \infty &\quad \forall \epsilon > 0 \quad \textrm{(slower than exponential decay)}\ .
\end{align}
Most results in the literature are formulated in terms of distribution functions and tails and apply to discrete as well as continuous random variables. \cite{J.Chover1973} provides a valuable connection to discrete random variables in terms of their mass functions $w(n)$, $n\in\N$.\\
Assume the following properties for a sequence $\{w(n)>0 : n\in \mathbb{N} \}$,
\begin{itemize}
\item[(a)] $\frac{w(n-1)}{w(n)} \to 1$ as $n \to \infty$,
\item[(b)] $z(1):=\sum_{n=0}^{\infty}{w(n)} \in (0,\infty)$\quad (normalizability),
\item[(c)] $\lim_{N \to \infty}\frac{(w*w)(N)}{w(N)} = C \in (0,\infty)$ exists.
\end{itemize}
Then \cite[Theorem 1]{J.Chover1973} asserts that $C = 2z(1)$ and $w(n)/z(1)$ is the mass function of a discrete, sub-exponential distribution. The implication
$$
\frac{(w^{\star L})(N)}{w(N)}\to Lz(1)^{L-1}\quad\mbox{as }N\to\infty\mbox{ for }L>2
$$
is given in \cite[Lemma 5]{J.Chover1973}. Sufficient (but not necessary) conditions for assumption (c) to hold are given in \cite[Remark 1]{J.Chover1973}.\\
Provided $z(1)< \infty$, then (c) holds if either of the following conditions are met:
\begin{itemize}
\item[(i)] $\sup_{1 \leq k \leq n/2}{\frac{w(n-k)}{w(n)}} \leq K$ 
\end{itemize}
for some constant $K>0$, or
\begin{itemize}
\item[(ii)] $w(n) = e^{-n \psi(n)}$
\end{itemize}
where $\psi(x)$ is a smooth function on $\bbR$ with $\psi(x) \searrow 0$ and $x^2 |\psi^{\prime}(x)| \nearrow \infty$ as $x\to\infty$, and $\int_{0}^{\infty}{dx \, e^{-\frac{1}{2}x^2 |\psi^{\prime}(x)|} }< \infty$.\\
Case (i) includes distributions with power law tails, $w(n) \sim n^{-b}$ with $b>1$. The stretched exponential with $\psi(x) = x^{\gamma - 1}$, $\gamma \in (0,1)$, and the almost exponential with $\psi(x) = (\log(x))^{-\beta}$, $\beta >0$, are covered by case (ii). The class of sub-exponential distributions includes many more known examples than the list given in Section \ref{sec:discussion} (see e.g. \cite[Table 3.7]{CharlesM.Goldie}).
%
Analogous to the characterisation of sub-exponential distributions, given by \eqref{sube2}, for discrete distributions the existence of the limit $(w*w)(N)/w(N)$ is equivalent to the existence of the following condition
\begin{equation}
\frac{\mathbb{P}[X_{1}+X_{2} = N]}{\mathbb{P}\big[\max\{X_{1},X_{2}\} = N \big]} \to 1 \textrm{ as }N \to \infty \ . 
\end{equation}
This holds, since we have the following equality of ratios
\begin{equation}
\nonumber
\frac{\mathbb{P}[X_{1}+X_{2} = N]}{\mathbb{P}\big[\max\{X_{1},X_{2}\} = N\big]} = \frac{Z_{2,N}}{2w(N)\sum_{n=0}^{N}{w(n)}} = \frac{(w*w)(N)}{2w(N)\sum_{n=0}^{N}{w(n)}} \ .
\end{equation}
 
Specific properties of power law tails $w(n)$ are used in \cite{ferrarietal07} to show condensation for finite systems in the sense of Definition \ref{condensation}. In Proposition \ref{prop1}, proved in Section \ref{proof equiv}, we extend this result to stationary product measures with general sub-exponential tails. In this context, condensation is basically characterized by the property \eqref{sube2} which assures emergence of a large maximum when the sum of independent variables is conditioned on a large sum. As summarized in the introduction, condensation in stochastic particle systems has mostly been studied in the thermodynamic limit with particle density $\rho\geq 0$, where $L,N\to\infty$ such that $N/L\to\rho$. In that case conditions on the sum of $L$ independent random variables become large deviation events, which have been studied in detail in \cite{Denisov2008,Armendariz2011}.\\

\subsection{Connection with the thermodynamic limit}
\label{tlimit}

In the thermodynamic limit, a definition of condensation is more delicate and the approach presented in \cite{stefan,Chleboun2013a} follows the classical paradigm for phase transitions in statistical mechanics via the equivalence of ensembles (see e.g.\ \cite{georgii1979canonical} for more details). A system with stationary product measures \eqref{eq:pm} exhibits condensation if the critical density \eqref{fmean} is finite, \textit{i.e.} $\rho_{c} < \infty$ and the canonical measures $\pi_{L,N}$ are equivalent to the critical product measure $\nu_{\phi_{c}}$ in the limit $L,N\to\infty$ such that $N/L\to\rho$ for all super-critical densities $\rho\geq\rho_c$. The interpretation is again that the bulk of the system (any finite set of sites) is distributed as the critical product measure in the limit. It has been shown in \cite{stefan} (see also \cite{Chleboun2013a} for a more complete presentation) that the regularity condition \eqref{regu} and $\rho_c <\infty$ imply the equivalence of ensembles, which has therefore been used as a definition of condensation in \cite[Definition 2.1]{Chleboun2013a}. 
Therefore, any process that condenses for fixed $\L$ with $\rho_{c}< \infty$ in the sense of Definition \ref{condensation} also condense in the thermodynamic limit. 
This includes all previously studied examples \cite{evans00,ferrarietal07,agl}, however there exists distributions that satisfy \eqref{regu} with $\rho_{c}< \infty$ but do not satisfy the conditions of Proposition \ref{prop1} and do not condense for fixed $\L$.
This is illustrated by an example given below.
It is also discussed in \cite[Section 3.2]{Chleboun2013a} that assumption \eqref{regu} is not necessary to show equivalence, but weaker conditions are of a special, less general nature and are not discussed here. Note also that equivalence of ensembles does not imply that the condensate concentrates on a single lattice site, the latter has been shown so far only for stretched exponential and power-law tails with $\rho_c <\infty$ in \cite{armendarizetal09,agl}. 
Both definitions involve only the sequence $\pi_{L,N}$ of canonical measures and not on the dynamics of the underlying process. Since the canonical measures \eqref{eqn: prod measure} are fully characterised by the weights $w(n)$ condensation can be viewed as a feature of the tails of the weights $w(n)$.

The condensation phenomena can also be studied for continuous random variables on the local state space $[0,\infty)$, see for example \cite{Armendariz2011}. 
The following continuous example, taken from \cite{Pitman1980} is shown to satisfy \eqref{regu} but is not sub-exponential. We show that the distribution has a finite mean and therefore exhibits condensation in the thermodynamic limit but not on a finite lattice in the sense of Definition \ref{condensation}. 
For a real-valued random variable $X$ with distribution function $F(x)=\mathbb{P}[X \leq x]$, assume $F^{\prime}(x) =  g^{\prime}(x) e^{-g(x)}$. 
Let $(x_{n})_{n \in \bbN}$ to be an increasing sequence with $x_{0}=0$ and $g(x)$ be a continuous and piecewise linear function such that $g(0)=0$ and $g^{\prime}(x) = 1/n$ for $x \in (x_{n-1},x_{n})$. The sequence $(x_{n})_{n \in \bbN}$ is defined iteratively as follows

\begin{align*}	
x_{n}-x_{n-1} &=  2 n e^{g(x_{n-1})} \\
\nonumber
g(x_{n})-g(x_{n-1}) &= 2 e^{g(x_{n-1})} \ ,
\end{align*}
and $g(x) - g(x_{n-1}) = \frac{x-x_{n-1}}{n} $ for $x \in [x_{n-1},x_{n})$. The mean is finite since
\begin{align*}
\int_{0}^{\infty}{x F^{\prime}(x)dx} = \sum_{n=1}^{\infty}{\frac{1}{n}\int_{x_{n-1}}^{x_{n}} x e^{-g(x)}dx}= \sum_{n=0}^{\infty} e^{-g(x_{n})} < \infty \ ,
\end{align*}
where the final step uses the relation $g(x_{n})-g(x_{n-1}) = 2e^{g(x_{n-1})} \geq 2 (1+g(x_{n-1}))$ to bound the series from above.

For all distributions satisfying \eqref{regu} which are not sub-exponential $Z_{L,N}/w(N)$ does not have a limit in $(0,\infty)$ as $N \to \infty$ and with Proposition \ref{prop1} there is no condensation on finite lattices according to Definition \ref{condensation}. 
For a discretized version of the example given above with weights $w(k) = \exp \{-g(k)\}$ we have $Z_{2,N}/w(N) \to \infty$ for $N \sim x_{n}$ as $n \to \infty$ \cite{Pitman1980}. 
For this example, following the proof of Proposition \ref{prop1} this implies that $\pi_{2,N}[\eta_{1} \land \eta_{2}\leq K]\to 0$ for $N \sim x_{n}$ as $N \to \infty$ and all $K \geq 0$. 
Therefore, the $L=2$ bulk occupation number $\eta_{1} \land \eta_{2}$ diverges in distribution as $N \to \infty$ by receiving a diverging excess mass from the condensate due to the light tail of $w(n)$.
It can be shown that these weights satisfy \eqref{regu} and therefore exhibit condensation in the thermodynamic limit where the excess mass can be distributed on a diverging number of sites.

For a process that exhibits condensation in the thermodynamic limit with a sub-exponential critical product measure, Proposition \ref{prop1} implies that condensation occurs also on finite lattices with $\rho_{c}<\infty$. Theorem \ref{mainres} then implies that this process is necessarily non-monotone for all fixed system sizes $L$. However, monotonicity for condensing processes with long-tailed but not sub-exponential stationary measures remains open.

\subsection{Proof of Proposition \ref{prop1}}
\label{proof equiv}
Let us first assume that the process exhibits condensation according to Definition \ref{condensation} and has canonical distributions of the form (\ref{eqn: prod measure}) where the weights fulfil (\ref{regu}), i.e. $w(n-1)/w(n)\to\phi_c \in (0,\infty ]$ as $n\to\infty$. 
In this part of the proof we establish that;
\begin{enumerate}
\item $\phi_{c}< \infty$,
\item $\frac{Z_{L,N}}{w(N)}$ has a limit as $N \to \infty$,
\item $z(\phi_{c})< \infty$, which also implies $\frac{Z_{L,N}}{w(N)} \to L z(\phi_{c})^{L-1}$ as $N \to \infty$, and
\item convergence of $\frac{Z_{L,N}}{w(N)} \to L z(\phi_{c})^{L-1}$ for some $L \geq 2$ implies convergence for $L=2$ and therefore \eqref{paff} holds.
\end{enumerate}

Step (1), show $\phi_c <\infty$. 
Assume first that $w(n-1)/w(n) \to \infty$ as $n \to \infty$. 
For all $K \in \bbN$ and $N>K$ we have
\begin{align*}
\nonumber
\pi_{L,N}[M_{L}\geq N-K] &= \frac{L}{Z_{L,N}}\sum_{n=0}^{K}Z_{L-1,n}w(N-n) \\
&\leq L\frac{K+1}{Z_{L,N}}\max_{0\leq n \leq K}\left(Z_{L-1,n}\right)\max_{0\leq n \leq K} \left(w(N-n)\right) \ .
\end{align*}
Let $n^{\star} = \textrm{arg}\!\max_{0\leq n\leq K} (w(N-n)) \leq K$. The partition function $Z_{L,N}$ is trivially bounded below by the event that site 1 takes $N-n^{\star}-1$ particles and the second site takes the remaining $n^{\star}+1$ particles, \ie
\begin{equation}
\nonumber
Z_{L,N}\geq w(0)^{L-2}w(n^{\star}+1)w(N-n^{\star}-1) \ .
\end{equation}
Therefore
\begin{align*}
\pi_{L,N}[M_{L}\geq N-K] \leq \frac{L}{w(0)^{L-2}}\frac{K+1}{w(n^{\star}+1)} \frac{w(N-n^{\star})}{w(N-n^{\star}-1)}\max_{0\leq n \leq K}\left(Z_{L-1,n}\right)\to 0 
\end{align*}
as $N \to \infty$, which implies condensation cannot occur in the sense of Definition \ref{condensation} contradicting the initial assumption, therefore $\phi_{c}< \infty$.

Step (2), prove $Z_{L,N}/w(N)$ converges as $N \to \infty$. By Definition \ref{condensation} the limit
\begin{equation}
a_{K}:=\lim_{N \to \infty}\pi_{L,N}[M_{L}\geq N-K]\ ,
\end{equation}
exists and $a_{K}>0$ for $K$ sufficiently large. For $N > K$ we have
\begin{equation}
\label{eqn:forconvg}
\pi_{L,N}[M_{L}\geq N-K] = L\frac{w(N)}{Z_{L,N}}\sum_{n=0}^{K}Z_{L-1,n}\frac{w(N-n)}{w(N)} \ .
\end{equation}
Since $w(N-n)/w(N) \to \phi_{c}^{n}$, $K$ is fixed, and $a_{K}>0$, \eqref{eqn:forconvg} implies the convergence of $Z_{L,N}/w(N)$ as $N \to \infty$. 

Step (3), prove $z(\phi_{c})< \infty$. By \eqref{eqn:cond1} we have $a_{K} \to 1$ as $K \to \infty$, taking the limit as $N \to \infty$ of \eqref{eqn:forconvg} this implies 
\begin{equation}
\lim_{K \to \infty}\sum_{n=0}^{K}Z_{L-1,n}\phi_{c}^{n} = \sum_{n=0}^{\infty}Z_{L-1,n}\phi_{c}^{n}  < \infty \ .
\end{equation}
Since we also have $\sum_{n=0}^{\infty}Z_{L-1,n}\phi_{c}^{n} = z(\phi_{c})^{L-1}$, this implies $z(\phi_{c}) < \infty$. Using $a_{K} \to 1$, \eqref{eqn:forconvg} then also implies $Z_{L,N}/w(N) \to L z(\phi_{c})^{L-1}$ as $N \to \infty$.

Step (4).
We have seen above that condensation implies $\phi_{c}< \infty$, $z(\phi_{c})<\infty$, and $Z_{L,N}/w(N) \to L z(\phi_{c})^{L-1}$ as $N \to \infty$, then \cite[Theorem 2.10]{embrechts1982convolution} implies
\begin{equation}
\nonumber
 \lim_{N \to \infty}\frac{Z_{2,N}}{w(N)} = 2 z(\phi_{c}) \ ,
\end{equation}
completing this part of the proof.

Now, let us consider a stochastic particle system with canonical distributions of the form (\ref{eqn: prod measure}) which fulfil (\ref{wlog}) and (\ref{paff2}) with $\phi_c =1$ and $z(1)<\infty$. We keep the notation for $\phi_c =1$ general in the following to clarify the argument. It is immediate from Proposition \ref{2tol}, and remembering that we set $w(0)=1$, that
\begin{equation*}
\pi_{L,N} [M_L =N]=Lw(N)/Z_{L,N} \to z(\phi_c )^{-(L-1)} >0\ .
\end{equation*}
Then we have for all fixed $K$ and $N>K$
\begin{align*}
\pi_{L,N} [M_L \geq N-K]&=L\sum_{n=0}^K \frac{w(N-n)Z_{L-1,n}}{Z_{L,N}} =\sum_{n=0}^K Z_{L-1,n}\frac{w(N-n)}{w(N)}\frac{Lw(N)}{Z_{L,N}}\nonumber\\
&\to \sum_{n=0}^K \frac{Z_{L-1,n} \phi_c^n }{z(\phi_c )^{L-1}} =\nu_{\phi_c} (\eta_1 +\ldots +\eta_{L-1} \leq K)
\end{align*}
as $N\to\infty$. Since $\nu_{\phi_c}$ is a non-degenerate probability distribution, this implies that $\nu_{\phi_c} (\eta_1 +\ldots +\eta_{L-1} \leq K)\to 1$ as $K\to\infty$, which is (\ref{eqn:cond1}). 

To compute the distribution outside the maximum we get for fixed $n_1 ,\ldots ,n_{L-1}$ and large enough $N$
\begin{eqnarray}
\lefteqn{\pi_{L,N} [\eta_1 {=}n_1 ,\ldots ,\eta_{L-1}{=}n_{L-1} |M_L {=}\eta_L] =\frac{w(n_1 )\cdots w(n_{L-1})w(N-n_{1} {-}\ldots {-}n_{L-1})}{\pi_{L,N} [M_L {=}\eta_L]\ Z_{L,N}} }\nonumber\\
& &=\frac{1}{L\pi_{L,N} [M_L =\eta_L]}w(n_{1} )\cdots w(n_{L-1})\frac{w(N-n_{1} -\ldots -n_{L-1})}{w(N)}\frac{Lw(N)}{Z_{L,N}}\nonumber\\
& &\to w(n_1 )\cdots w(n_{L-1})\phi_c^{n_{1} +\ldots +n_{L-1}} /z(\phi_c )^{L-1}\ ,
\end{eqnarray}
as $N\to\infty$. Here we have used that spatial homogeneity of the measure and asymptotic uniqueness of the maximum according to \eqref{eqn:cond1} imply $\pi_{L,N} [M_L =\eta_L]\to 1/L$. This completes the proof of Proposition \ref{prop1}.\\

\section{Examples of homogeneous condensing processes}\label{examples}
In this section we review several stochastic particle systems that exhibit condensation. By Theorem \ref{mainres}, if these processes are homogeneous and monotone with a finite critical density they do not have stationary product measures. 
To prove monotonicity for the examples mentioned below it is sufficient to construct a basic coupling of the stochastic process which preserves the partial order and particles jump together with maximal rate.
For a definition of a coupling see \cite{peresbook} and for the statement of Strassen's theorem linking stochastic monotonicity and the coupling technique see \cite{Grimmett2001}. 
The steps to construct a basic coupling are outlined in \cite{saada}. \\

\subsection{Misanthrope processes and generalizations}
\label{sec:mis}
Condensation in homogeneous particle systems has mostly been studied in the framework of misanthrope processes \cite{misanthrope,saada}. At most one particle is allowed to jump at a time and the rate that this occurs depends on the number of particles in the exit and entry sites. The misanthrope process is a stochastic particle system on the state space $\O_{L} = \bbN^{\Lambda}$ defined by the generator
\begin{equation}
\mathcal{L}^{mis}f(\eta) = \sum_{x,y \in \Lambda}{r(\eta_{x},\eta_{y})p(x,y)\big( f(\eta^{x,y}) - f(\eta) \big)} \ .
\label{mis gene}
\end{equation}
Here $\eta^{x,y}=\eta - \delta_{x}+\delta_{y}$ denotes the configuration after a single particle has jumped from site $x$ to site $y$, which occurs with rate $r(\eta_{x},\eta_{y})$. The purely spatial part of the jump rates, $p(x,y)\geq 0 $, are transition probabilities of a random walk on $\Lambda$. Such models are usually studied in a translation invariant setting with periodic boundary conditions, typical choices are symmetric, totally asymmetric or fully connected jump rates with $p(x,y) = 1/2(\delta_{y,x+1}+\delta_{y,x-1})$, $p(x,y) = \delta_{y,x+1}$, or $p(x,y) = (1-\delta_{y,x})/(L-1)$, respectively.

Misanthrope processes include many well-known examples of interacting particle systems, such as zero-range processes \cite{spitzer70}, the inclusion process \cite{giardinaetal09,giardinaetal10}, and the explosive condensation model \cite{waclawetal11}. It is known \cite{misanthrope,Fajfrova2014} that misanthrope processes with translation invariant dynamics $p(x,y) = q(x-y)$ exhibit stationary product measures if and only if the rates fulfil
\begin{equation}
\frac{r(n,m)}{r(m+1,n-1)} = \frac{r(n,0)r(1,m)}{r(m+1,0)r(1,n-1)}\textrm{\quad for all\quad}n\geq 1, m\geq 0\ ,
\end{equation}
and, in addition, either 
\begin{equation}
\begin{cases}
q(z) = q(-z) \textrm{ for all } z \in \Lambda \textrm{ or,} \\
r(n,m)-r(m,n) = r(n,0) - r(m,0) \textrm{ for all } n,m \geq 0 \ .
\end{cases}
\end{equation}
The corresponding stationary weights satisfy
\begin{align}
\frac{w(k+1)}{w(k)} = \frac{w(1)}{w(0)}\frac{r(1,k)}{r(k+1,0)} 
\quad\textrm{and}\quad w(n) =\prod_{k=1}^{n}\frac{r(1,k-1)}{r(k,0)} \ .
\label{misanthrope stationary}
\end{align}
Misanthrope processes are monotone (attractive) \cite{misanthrope} if and only if the jump rates satisfy
\begin{align}
\nonumber
r(n,m) \leq r(n+1,m) &\textrm{ \textit{i.e.} non-decreasing in $n$,}\\
\label{misanthrope monotonicity}
r(n,m) \geq r(n,m+1) &\textrm{ \textit{i.e.} non-increasing in m} \ .
\end{align}

In Theorem \ref{mainres} we have proved that processes that exhibit stationary product measures and condensation with finite mean or power law tails, $w(n) \sim n^{-b}$, with $b \in (3/2,2]$ are necessarily not monotone.  For power law tails with $b \in (1,3/2]$ convergence of $Z_{L,N}/w(N)$ is from below and our method does not disprove monotonicity of the measures $\pi_{L,N}$ or monotonicity of the underlying process. Using the specific form of the stationary measures \eqref{misanthrope stationary}, it is clear that possible examples of monotone processes with stationary product measures of this form cannot be of misanthrope type.
\begin{lemma}
\label{lemma:mismonotone}
A misanthrope process defined by the generator \eqref{mis gene}, that has stationary product measures and exhibits condensation is not monotone.
\end{lemma}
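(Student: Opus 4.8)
The plan is to argue by contradiction directly from the explicit form \eqref{misanthrope stationary} of the stationary weights together with the attractivity conditions \eqref{misanthrope monotonicity}, so that no case distinction is needed. Suppose the misanthrope process defined by \eqref{mis gene} is monotone, has stationary product measures, and exhibits condensation in the sense of Definition \ref{condensation}. By \eqref{misanthrope stationary},
\[
\frac{w(k+1)}{w(k)}=\frac{w(1)}{w(0)}\,\frac{r(1,k)}{r(k+1,0)}\,,\qquad k\geq 0\,,
\]
where all the rates appearing are finite and strictly positive because $w(n)>0$ for all $n$. The key observation is that monotonicity forces this ratio to be non-increasing in $k$: the first inequality in \eqref{misanthrope monotonicity} ($r$ non-decreasing in the exit occupation) makes $k\mapsto r(k+1,0)$ non-decreasing, the second inequality ($r$ non-increasing in the entry occupation) makes $k\mapsto r(1,k)$ non-increasing, and the quotient of a positive non-increasing sequence by a positive non-decreasing one is non-increasing.

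Being non-increasing and bounded above by $w(1)/w(0)<\infty$, the sequence $w(k+1)/w(k)$ converges to its infimum, which lies in $[0,\infty)$; thus the regularity assumption \eqref{regu} holds automatically, with $\phi_c\in(0,\infty]$. Since condensation requires $\phi_c<\infty$ by Proposition \ref{prop1}, we may normalize as in \eqref{wlog} so that $w(0)=1$ and $w(k+1)/w(k)\to\phi_c=1$. A non-increasing sequence with limit $1$ is bounded below by $1$, so $w(k+1)\geq w(k)$ for every $k\geq 0$, hence $w(k)\geq w(0)=1$ for all $k$, and therefore $z(\phi_c)=z(1)=\sum_{k\geq 0}w(k)=\infty$. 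This contradicts the fact, established in step~(3) of the proof of Proposition \ref{prop1}, that condensation on a finite lattice forces $z(\phi_c)<\infty$. Hence the process cannot be monotone.

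I do not expect a genuine obstacle: the whole content is the monotonicity of the ratio $w(k+1)/w(k)$, which is precisely the mechanism ruling out the heavy sub-exponential tails required for condensation. The one point worth emphasising is that this argument is uniform over all admissible tails, and in particular it covers the infinite-mean power-law regime $b\in(1,3/2]$ left open by Theorem \ref{mainres}, for which that theorem gives no information. (Alternatively one could invoke Theorem \ref{mainres} whenever $\rho_c<\infty$ or $b\in(3/2,2]$ and run the weight-ratio argument only in the remaining cases, but the uniform version is shorter and self-contained.)
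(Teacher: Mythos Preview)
Your argument is correct and follows essentially the same route as the paper: use the attractivity conditions \eqref{misanthrope monotonicity} together with \eqref{misanthrope stationary} to see that $w(n-1)/w(n)$ is non-decreasing (equivalently, your $w(k+1)/w(k)$ is non-increasing), deduce $\phi_c<\infty$ from Proposition~\ref{prop1}, and then conclude $z(\phi_c)=\infty$ from $w(n)\phi_c^n\geq w(0)$, contradicting condensation. The only cosmetic differences are that the paper works with the reciprocal ratio and keeps $\phi_c$ general rather than normalising to $\phi_c=1$; your closing remark that this covers the $b\in(1,3/2]$ range not handled by Theorem~\ref{mainres} is also apt.
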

\begin{proof}
\eqref{misanthrope monotonicity} gives necessary conditions for the monotonicity of the misanthrope process and implies with \eqref{misanthrope stationary} that
\begin{equation}
\frac{w(n-1)}{w(n)} = \frac{r(n,0)}{r(1,n-1)}
\end{equation}
is non-decreasing. This implies that the ratio converges to $\phi_c \in (0 , \infty]$, which is the regularity assumption \eqref{regu}. Assuming the process condenses in the sense of Definition \ref{condensation}, then Proposition \ref{prop1} implies $\phi_{c}< \infty$. Now we have
\begin{equation}
\frac{w(n-1)}{w(n)} \leq \phi_{c}\quad \implies  \quad w(n) \geq w(n-1) \phi_{c}^{-1} 
\end{equation}
for all $n \in \bbN$. Therefore, $w(n) \geq w(0)\phi_{c}^{-n}$ which implies
\begin{equation}
\sum_{n=0}^{N}{w(n)\phi_{c}^{n}}\geq w(0)\sum_{n=0}^{N}{\phi_{c}^{n}\phi_{c}^{-n}} \to \infty \textrm{ as }N \to \infty \ .
\end{equation}  
We conclude that the critical partition function diverges and the critical measure $\nu_{\phi_c}$ does not exist, which is a necessary condition for condensation. Therefore condensation does not occur in misanthrope processes with stationary product measures.
\end{proof}

In \cite{saada} generalised misanthrope processes have been introduced where more than one particle is allowed to jump simultaneously. They are defined via transitions $\eta \to \eta + n(\delta_{y}-\delta_{x})$ for $n \in \{0,\ldots , \eta_{x}\}$ at rate $\Gamma_{\eta_{x},\eta_{y}}^{n}(y-x)$ and conditions on the jump rates for monotonicity are characterized. This class provides candidates for possible monotone, condensing processes with product measures as we discuss in the next subsection.

\subsection{Generalised zero-range processes}
The generalised zero-range process (gZRP) \cite{saada} is a stochastic particle system 
on the state space $\Omega_{L} = \mathbb{N}^{\Lambda}$ defined by the generator
\begin{equation}
\label{eq:gzrpgene}
\mathcal{L}^{gZRP}f(\eta) = \sum_{x,y \in \L}\sum_{k=1}^{\eta_{x}}{\alpha_{k}(\eta_{x})p(x,y)\big(f(\eta^{x \to(k) y})-f(\eta)\big)} \ .
\end{equation}
Here $\eta^{x \to(k) y} \in \O_{L}$ is the configuration after $k$ particles have jumped from $x$ to $y \in \Lambda$. The jump rates $\a_{k}(n)$ satisfy $\a_{k}(n) = 0$ if $k>n$, and we use the convention that empty summations are zero. We consider translation invariant $p(x,y)$ on a finite lattice $\L = \{1,\ldots , L\}$ and note that the process preserves particle number $\sum_{x}{\eta_{x}}= N$.

It is known \cite{Evans2004,Fajfrova2014} that these processes exhibit stationary product measures if and only if the jump rates have the explicit form
\begin{equation}
\a_{k}(n) = g(k) \frac{h(n-k)}{h(n)} \ ,
\label{eq:gzrprates}
\end{equation}
where $g,h:\bbN \to [0,\infty )$ are arbitrary non-negative functions  with $h$ strictly positive. The stationary weights are then given by $w(n)=h(n)$. 
Monotonicity of the gZRP can be characterized in terms of
\begin{align}
\label{eq:gzrpR}
R_{k}(n) := \sum_{m=0}^{n-k}{\left(\a_{n-m}(n) - \a_{n+1-m}(n+1)\right)} \ .
\end{align}
The gZRP is monotone if and only if
\begin{align}
\nonumber
R_{k}(n) \geq 0 &\textrm{ for all }n\geq 1 \textrm{ and }k \in \{ 1,\ldots ,n\} \\
\a_{k}(n+1) \geq R_{k}(n) &\textrm{ for all }n\geq 1 \textrm{ and } k \in \{1,\ldots ,n\} \ .
\label{eq:gzrpmono}
\end{align}
We note these conditions arise from a special case of the results in \cite[Theorem 2.11]{saada} on generalised misanthrope models, since $\a_{k}(n)$ depends only on the occupation of the exit site and not the entry site.

In this class, which is also discussed in detail in \cite{Fajfrova2014}, condensing processes which are monotone, homogeneous, and have stationary product measures with a power tail $w(n) \sim n^{-b}$ with $b \in (1,3/2]$ are conjectured to exist. As an example, consider the gZRP with rates given by
\begin{equation}\label{gzrpex}
\a_{k}(n) = \begin{cases}
0 &\textrm{ if }k=0 \textrm{ or }n=0  \\
k^{-b}(1-\frac{k}{n})^{-b} &\textrm{ if } k \in \{1,\ldots ,n-1 \} \\
1 &\textrm{ otherwise} \ .
\end{cases}
\end{equation}
Since $\a_{k}(n)$ is of the form \eqref{eq:gzrprates} the process exhibits stationary product measures with weights of the form
\begin{align*}
w(n) = \begin{cases} 1 &\textrm{ if } n=0 \\
n^{-b} &\textrm{ otherwise } \end{cases}\ .
\end{align*}
For all $b>1$ and $L \geq 2$ the ratio $\frac{Z_{L,N}}{w(N)} $ converges to $ L z(1)^{L-1}$ as $N \to \infty$ \cite{ferrarietal07} so by Proposition \ref{prop1} the process exhibits condensation. 
To prove the process is monotone we must show the rates satisfy the conditions given in equation \eqref{eq:gzrpmono}. We first prove $R_{k}(n) \geq 0 $ for all $k \in \{1,\ldots , n \}$ and $n\geq 1$. Since $\a_{n} (n) -\a_{n+1}(n)= 0$ for all $n\geq1$ we can drop the $m=0$ term from the definition of $R_{k}(n)$. We have
\begin{align}
\nonumber
R_{k}(n) = \sum_{m=1}^{n-k}m^{-b}\bigg[ \bigg(1-\frac{m}{n}\bigg)^{-b} - \bigg(1-\frac{m}{n+1}\bigg)^{-b}  \bigg] \ .
\end{align}
Since $(1-x)^{-b}$ is increasing in $x$ for $b>0$ and $m/n > m/(n+1)$ we have
\begin{equation}
\nonumber
R_{k}(n) > 0 \textrm{ for all } k \in \{1,\ldots ,n\} \textrm{ and } n \geq 1 \ .
\end{equation}
We also need to show $\a_{k}(n+1) \geq R_{k}(n)$ for all $k \in \{1,\ldots ,n\}$ and $n \geq 1$. 
Taking discrete derivatives in $k$
\begin{align}
\nonumber
\a_{k+1}(n+1) - R_{k+1}(n)-(\a_{k}(n+1) - R_{k}(n)) =\a_{k}(n) - \a_{k}(n+1)\\
\nonumber
 = k^{-b}\bigg(1-\frac{k}{n}\bigg)^{-b} - k^{-b} \bigg(1 - \frac{k}{n+1}\bigg)^{-b} > 0 \ ,
\end{align}
so $\a_{k}(n+1) - R_{k}(n)$ is an increasing function in $k$. Therefore,
\begin{equation}
\nonumber
\a_{k}(n+1) - R_{k}(n) \geq  \a_{1}(n+1) - R_{1}(n)
\end{equation}
for all $k \geq 1$, and it suffices to show
\begin{equation}
\label{eq:gzrpcond}
A(n):=\a_{1}(n+1) - R_{1}(n) \geq 0 \textrm{ for all } n\geq 1 \ .
\end{equation}
We present numerical evidence in Figure \ref{gzrpnumerics} which corroborates our claim that the process with rates (\ref{gzrpex}) is indeed monotone for $b\in (1,3/2]$ and is not for $b>3/2$.
\begin{figure}[t]
\centering
\includegraphics[scale=1.05,clip=true,trim= 2em 0 0em 0]{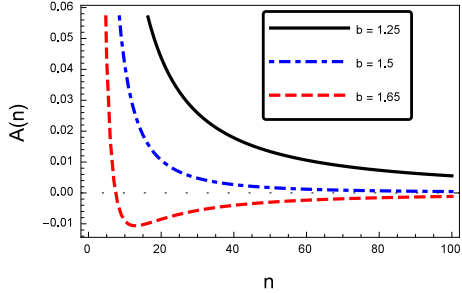}
\caption{\label{gzrpnumerics}
Monotonicity condition \eqref{eq:gzrpcond} for $b=1.25$, $b=1.5$ and $b=1.65$. For $b = 1.65$ the function $A(n)$ falls below zero, implying the gZRP with rates \eqref{gzrpex} is non-monotone. For $b=1.25$ and $b= 1.5$ the function $A(n)$ is positive indicating the process is monotone.}
\end{figure}

\subsection{Homogeneous monotone processes without product measures\label{homnonpro}}
The chipping model is a stochastic particle system on the state space $\Omega_{L}= \mathbb{N}^{\Lambda}$, introduced in \cite{Majumdarb,Majumdar1998}. The dynamics are defined by the generator 
\begin{align}
\nonumber
\mathcal{L}^{chip}f(\eta) =& \sum_{x,y \in \Lambda_{L}}{w\1(\eta_{x}>0)p(x,y) \big(f(\eta^{x,y})-f(\eta) \big)} \\
\label{eq:chipgen}
&+ \sum_{x,y \in \Lambda_{L}}{\1(\eta_{x}>0) p(x,y) \big(f(\eta+\eta_{x}(\delta_{y}-\delta_{x}))-f(\eta) \big)} \ .
\end{align}
Here $\eta + \eta_{x}(\delta_{y}-\delta_{x})$ denotes the configuration after all the particles at site $x$ have jumped to site $y$, which occurs at rate $1$, and single particles jump at rate $w>0$. The spatial part $p(x,y)$ is again spatially homogeneous as described in Section \ref{sec:mis}.

It is easy to see that a basic coupling will preserve the partial order on the state space $\Omega_{L}$ as defined in Section \ref{sec:mon}. Therefore, by Strassen's theorem \cite{Grimmett2001}, the chipping model is a monotone process and Lemma \ref{order} implies that conditional stationary measures of the process are ordered in $N$. The condensation transition in the chipping model was established on a heuristic level in \cite{Majumdarb,Majumdar1998,Rajesh2001}. We have defined the critical density $\rho_{c}$ only for systems with product stationary measures (see \eqref{fmean}). In general, the critical density on a fixed system of size $L\geq 2$, with unique invariant measures $\mu_{L,N}$, can be defined as the background density of bulk sites
\begin{equation}
\rho_{c}(L) := \limsup_{N\to \infty}\frac{\mu_{L,N}\left(N-M_{L}\right)}{L-1} \ .
\end{equation} 
Notice if $\mu_{L,N}$ are conditional product measures (see \eqref{eqn: prod measure}) then $\rho_{c}(L)$ is consistent with \eqref{fmean} and in particular independent of $L$, which follows from Proposition \ref{prop1} (more explicitly \eqref{weaklim}). For the chipping model in the case $L=2$, the process reduces to a $1$-dimensional process on $\{0,\ldots , N \}$ and the measure $\mu_{2,N}$ and $\rho_{c}(2)$ can be computed explicitly to find
\begin{equation}
\rho_{c}(2) = \frac{1}{2}\left( \sqrt{2w+1}-1\right) \ .
\end{equation}
In \cite{Majumdarb,Majumdar1998,Rajesh2001} the critical density in the thermodynamic limit is defined as 
\begin{equation}
\nonumber
\rho_{c}:= \sup \left\{\rho\geq 0\, :\, \frac{\mu_{L,N}(\eta_{x}^{2})}{L} \to 0 \textrm{ as } N,L \to \infty \textrm{ such that }\frac{N}{L} \to \rho \right\} \ ,
\end{equation}
inspired by the fact that in case of condensation the second moment is either dominated by the condensate and scales like the system size $L$, or it diverges since the maximal invariant measure does not have finite second moment. It is shown by heuristic computations in a mean-field limit that 
\begin{equation}
\nonumber
\rho_{c} =  \sqrt{w+1}-1 \ .
\end{equation}
This suggests that the critical density can depend on the system size $L$ for distributions with non-product stationary measures.

 The $\sqrt{w}$ scaling of the critical density can be intuitively understood in the two site chipping model with $N$ particles. This process can be interpreted as a symmetric random walk on the state space $\{0,\ldots,N\}$ with jumps $i \to  i \pm 1$ at rate $w$ and random jumps to either boundary (resetting, $i \to 0$ or $N$) at rate $1$. After a reset the particle diffuses at rate $w$ and reaches a typical distance $\sqrt{w}$ from the boundary until the next reset. So this model is a monotone and spatially homogeneous process that heuristically exhibits a condensation transition with finite (size dependent) critical density, but it does not exhibit stationary product measures. Condensation is also observed in models where chipping is absent ($w=0$) and the dynamics result in a single block of particles jumping on the lattice $\{1,\ldots ,L \}$ corresponding to the critical density $\rho_c =0$. 

\appendix
\section{Connection to statistical mechanics\label{sec:statmech}}

Condensation and non-monotonicity are also related to convexity properties of the entropy, which we briefly describe in the following in a non self-contained and non-rigorous discussion that is aimed at readers with a background in statistical mechanics. In the thermodynamic limit the canonical entropy is defined as
\begin{equation}
s(\rho ):=\lim_{\substack{L\to\infty \\ N/L\to\rho}} \frac{1}{L}\log Z_{L,N}\ .
\label{eq:entro}
\end{equation}
For the processes we consider, equivalence of canonical and grand-canonical ensembles has been established in \cite{stefan} for condensing or non-condensing systems, so $s(\rho )$ is given by the (logarithmic) Legendre transform of the pressure
\begin{equation}
p(\phi ):=\log z(\phi )\ .
\end{equation}
This takes a particularly simple form since the grand-canonical measures are factorisable, and is a strictly convex function for $\phi <\phi_c$. General results then imply that $s(\rho)$ also has to be strictly convex below the critical density $\rho_c$ (see e.g.\ \cite{touchette2009}), which holds for non-condensing systems and condensing systems with $\rho_{c}=\infty$. 
For condensing systems with finite critical density $s(\rho )$ is linear for $\rho >\rho_c$, consistent with phase separation phenomena, where in this case the condensed phase formally exhibits density $\infty$ (see e.g. \cite{Chleboun2015} for a general discussion). 

It is not possible to derive general results for finite $L$ and $N$, but if we assume that the ratio of weights $w(n-1)/w(n)$ is monotone increasing in $n$, we can show that a monotone order of $\pi_{L,N}$ implies that $N\mapsto \frac{1}{L}\log Z_{L,N}$ is necessarily convex. 
Note that with (\ref{regu}) our assumption implies that $w(n)$ has exponential tails with $\phi_c \in (0,\infty )$ or decays super-exponentially with $\phi_c =\infty$, and in both cases the system does not exhibit condensation. 
We can define $w(-1)=0$ so that $w(\eta_x -1)/w(\eta_x )$ is a monotone increasing test function on $\Omega_L$. It is easy to see that for its canonical expectation we have for all $L\geq 2$ and $N\geq 2$
\begin{equation}
\pi_{L,N} \Big(\frac{w(\eta_x -1)}{w(\eta_x )}\Big) =\frac{Z_{L,N-1}}{Z_{L,N}}\ .
\label{eq:exe}
\end{equation}
Therefore, monotonicity of the canonical measures implies that the ratio of partition functions (\ref{eq:exe}) is increasing and the discrete derivative of $\log Z_{L,N}$ in $N$ is decreasing. We expect that in the limit $L\to \infty$ the monotonicity assumption on $w(n-1)/w(n)$ is not necessary, and $\frac{1}{L}\log Z_{L,N}$ is convex in $N$ for all non-condensing systems, consistent with strict convexity of $s(\rho )$. 


For condensing systems the weights $w$ decay sub-exponentially, and if $w(n-1)/w(n)$ is monotone then it has to be decreasing in $n$. Therefore the choice $w(-1)=0$ implies $f(\eta) = w(\eta_{x}-1)/w(\eta_{x})$ is not a monotone test function, and the above general arguments cannot be used to relate non-convexity of $\frac{1}{L}\log Z_{L,N}$ to the absence of a monotone order in $\pi_{L,N}$. 
For particular condensing systems, however, it has been shown that $\frac{1}{L}\log Z_{L,N}$ is typically convex for small $N<\rho_c L$ and concave for larger $N>\rho_c L$ \cite{chlebounetal10,agl}. These results focus on power law and stretched exponential tails for $w(n)$, and have been derived for zero-range processes where the ratio $Z_{L,N-1}/Z_{L,N}$ is equal to the canonical current. Non-monotone behaviour around the critical density therefore has implications for finite-size corrections and derivations of hydrodynamic limits as mentioned in the introduction.
\section{The infinite mean power law case with $b=2$}
\label{bis2}
Consider stationary weights of the form $w(n)= n^{-2}$ with $w(0) = 1$, we prove the non-monotonicity of $Z_{L,N}/w(N)$ in a similar fashion to the proof of Lemma \ref{lemma: infinite mean} summarised in the following lemma.
\begin{lemma}
\label{lemmabis2}
For stationary weights of the form  $w(n)= n^{-2}$ with $w(0) = 1$ we have
\begin{equation}
\nonumber
\frac{N}{\log(N)}\left(\frac{Z_{2,N}}{w(N)} - 2z(1) \right) \to \hat{F}_{2} = 4 \textrm{ as }N \to \infty \ .
\end{equation}
For $L>2$ we have
\begin{equation}
\label{bis2rec}
\frac{N}{\log(N)}\left(\frac{Z_{L,N}}{w(N)} - Lz(1)^{L-1} \right) \to \hat{F}_{L}:=z(1)\hat{F}_{L-1} + (L-1)z(1)^{L-2}\hat{F}_{2} \textrm{ as }N \to \infty \ ,
\end{equation}
which is positive for all $L \geq 2$ since $\hat{F}_{2}>0$.
\end{lemma}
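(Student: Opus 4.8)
The plan is to mirror the proof of Lemma~\ref{lemma: infinite mean} almost line by line, replacing the normalising factor $N^{b-1}$ throughout by $N/\log N$. The point is that $b=2$ is exactly the borderline exponent: in the Taylor expansion of $\big(1-\tfrac{n}{N}\big)^{-b}$ the term with $i=1$ produces a multiple of $\tfrac1N\sum_{n=1}^{N/2}n^{1-b}$, which grows like $N^{2-b}$ for $b\in(1,2)$ (giving the scale $N^{1-b}$) but like $\log(N/2)\sim\log N$ for $b=2$ (giving the scale $(\log N)/N$); all terms with $i\geq2$, together with every tail remainder, remain of order $N^{-1}$ and so vanish after multiplication by $N/\log N$.

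For the base case $L=2$ I would start from the $b=2$ specialisation of \eqref{powerlaw difference}: subtracting $2z(1)$ from $Z_{2,N}/w(N)$, using $w(n)=n^{-2}$ (with $w(0)=1$), splitting the convolution sum at $N/2$ (say $N$ even, with the obvious modification for odd $N$) and substituting the full Taylor series of $(1-x)^{-2}$ gives
\begin{align*}
\frac{Z_{2,N}}{w(N)}-2z(1)=2\sum_{i=1}^{\infty}\frac{1}{i!}\prod_{j=0}^{i-1}(j+2)\,\frac{1}{N^{i}}\sum_{n=1}^{N/2}n^{\,i-2}\;-\;2\sum_{n=N/2+1}^{\infty}n^{-2}\;-\;16\,N^{-2}.
\end{align*}
Multiplying by $N/\log N$ and using the integral bounds \eqref{eq:intbounds}, the $i=1$ term contributes $\tfrac{N}{\log N}\cdot\tfrac4N\sum_{n=1}^{N/2}n^{-1}=\tfrac{4}{\log N}\big(\log N+O(1)\big)\to4$, while each $i\geq2$ term is $O(N^{-1})$ and the two remainders are $O(N^{-1})$ and $O(N^{-2})$, so they all vanish; hence $\hat{F}_{2}=4$.

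For the inductive step I would reproduce the splitting \eqref{chuck}, writing $\tfrac{N}{\log N}\big(Z_{L+1,N}/w(N)-(L+1)z(1)^{L}\big)=\Xi_{L,N}+\Theta_{L,N}$ with $N^{b-1}$ replaced by $N/\log N$ in the definitions of $\Xi_{L,N}$ and $\Theta_{L,N}$, and feed in the inductive hypothesis in the form $\tfrac{Z_{L,n}}{w(n)}=Lz(1)^{L-1}+(\hat{F}_{L}+o_{n}(1))\tfrac{\log n}{n}$, the $b=2$ analogue of \eqref{eq:induction2}. In $\Theta_{L,N}$ the part proportional to $Lz(1)^{L-1}$ reduces, via the $L=2$ asymptotics just established, to $\tfrac{Lz(1)^{L-1}\hat{F}_{2}}{2}$, while in the correction part one uses that $\tfrac{N}{\log N}\cdot\tfrac{w(N-n)}{w(N)}\cdot\tfrac{\log(N-n)}{N-n}\to1$ for each fixed $n$, with the summand dominated by a fixed multiple of $w(n)$, so dominated convergence gives $\hat{F}_{L}z(1)$; hence $\Theta_{L,N}\to z(1)\hat{F}_{L}+\tfrac{Lz(1)^{L-1}\hat{F}_{2}}{2}$. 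For $\Xi_{L,N}$ one expands $(1-x)^{-2}$ and discards any fixed initial block $n<N_{1}$ of the sum before passing to the limit, exactly as in the power-law proof; the $i=1$ term is again the only survivor, and using $Z_{L,n}\sim Lz(1)^{L-1}n^{-2}$ one gets $\sum_{n\leq N/2}nZ_{L,n}\sim Lz(1)^{L-1}\log N$, so $\Xi_{L,N}\to\tfrac{Lz(1)^{L-1}\hat{F}_{2}}{2}$. Adding the two limits yields $z(1)\hat{F}_{L}+Lz(1)^{L-1}\hat{F}_{2}$, which is \eqref{bis2rec} with $L$ replaced by $L+1$; positivity of $\hat{F}_{L}$ for all $L\geq2$ then follows by induction from $\hat{F}_{2}=4>0$ and $z(1)>0$. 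The extension to weights $w(n)=n^{-2}h(n)$ with $h(n)\to c\in(0,\infty)$ goes through with the same minor modifications noted after Lemma~\ref{lemma: infinite mean}, the limits becoming $c\hat{F}_{2}$ and $c^{L-1}\hat{F}_{L}$.

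The main obstacle is bookkeeping rather than a new idea: one must check uniformly that all $i\geq2$ Taylor terms and all tail sums are genuinely $o\big((\log N)/N\big)$, and that the replacements $\log(N-n)/\log N\to1$, $(1-n/N)^{-2}\to1$ and $Z_{L,n}/w(n)\to Lz(1)^{L-1}$ may be interchanged with the now logarithmically weighted summations. As in Lemma~\ref{lemma: infinite mean} this is handled by the integral bounds \eqref{eq:intbounds} together with truncating the sums at a fixed index $N_{1}$ before passing to the limit; apart from $\log N$ appearing in place of a power of $N$, no step differs materially from the finite-exponent case.
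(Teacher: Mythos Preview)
Your proposal is correct and follows essentially the same route as the paper: the same Taylor expansion for $L=2$ isolating the $i=1$ logarithmic term, the same $\hat\Xi_{L,N}+\hat\Theta_{L,N}$ splitting for the inductive step, and the same identification of each limit. The only cosmetic difference is in controlling the factor $\log(N-n)$ inside $\hat\Theta_{L,N}$: the paper sandwiches it between $\log(N/2-1)$ and $\log N$ to get matching upper and lower bounds, whereas you invoke dominated convergence pointwise; both work.
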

\begin{proof}
First consider the case $L = 2$. As in the proof of Lemma \ref{lemma: infinite mean} we will utilise the full Taylor expansion of $(1-x)^{-2}$, integral bounds on monotone series, and assume $N$ is even, for $N$ odd there exists obvious modifications to the proof. We have
\begin{align*}
\nonumber
\frac{Z_{2,N}}{w(N)} - 2z(1) &= 2\sum_{n=1}^{N/2}n^{-2}\left(1-\frac{n}{N}\right)^{-2} - 2 \sum_{n=1}^{\infty}n^{-2} -2^{4}N^{-2} \ .
\end{align*}
Where the terms $n=0$ in the above summations cancel. Substituting the Taylor expansion of $(1-x)^{-2}$ we find
\begin{align}
\label{eq:pl2}
\frac{Z_{2,N}}{w(N)} - 2z(1) =2\sum_{i=1}^{\infty}{(i+1)N^{-i}\sum_{n=1}^{N/2}{n^{-2+i}}} - 2 \sum_{n=N/2+1}^{\infty}n^{-2} -2^{4}N^{-2} \ .
\end{align}
Now we are in a position to apply the integral bounds \eqref{eq:intbounds}, first noting that $n^{-2+i}$ is decreasing for $i=1$, constant and equal to $1$ for $i=2$, and increasing for $i\geq 3$. Multiplying both sides of \eqref{eq:pl2} and applying the integral bounds it is easy to show
\begin{equation}
\nonumber
\frac{N}{\log(N)}\left(\frac{Z_{2,N}}{w(N)}-2z(1)\right) \to 4 \textrm{ as }N \to \infty \ .
\end{equation}

Now consider the case $L >2$ and make the following inductive hypothesis
\begin{equation}
\nonumber
\lim_{N\to \infty}\frac{N}{\log(N)}\left( \frac{Z_{L,N}}{w(N)} - Lz(1)^{L-1} \right)  = \hat{F}_{L} = z(1)\hat{F}_{L-1} +  (L-1) z(1)^{L-2}\hat{F}_{2} \ .
\end{equation}

As in the proof of Lemma \ref{lemma: infinite mean} write
\begin{align}
\nonumber
\frac{N}{\log(N)}\bigg(\frac{Z_{L+1,N}}{w(N)} &- (L+1)z(1)^{L} \bigg)\\
= \underbrace{\frac{N}{\log(N)}\bigg(\sum_{n=0}^{N/2}{Z_{L,n}\frac{w(N-n)}{w(N)}}-z(1)^{L}\bigg)}_{\hat{\Xi}_{L,N}} +& \underbrace{\frac{N}{\log(N)}\bigg(\sum_{n=0}^{N/2-1}{w(n)\frac{Z_{L,N-n}}{w(N)}}-Lz(1)^{L} \bigg)}_{\hat{\Theta}_{L,N}}\ .
\label{chuck2}
\end{align}
We fist establish the limit of $\hat{\Theta}_{L,N}$ in \eqref{chuck2}. 
The inductive hypothesis can be rewritten as
\begin{equation}
\label{eq:induction3}
\frac{Z_{L,n}}{w(n)} = \left(\hat{F}_{L}+o_{N}(1)\right)\frac{\log(N)}{N}+L z(1)^{L-1}
\end{equation}
Similar to the proof of Lemma \ref{lemma: infinite mean} $\hat{\Theta}_{L,N}$ can be written in the form
\begin{align}
\nonumber
\hat{\Theta}_{L,N} =& \frac{N}{\log(N)}\left( \hat{F}_{L}+o_{N}(1)\right) \left(\sum_{n=0}^{N/2-1}w(n)\frac{w(N-n)}{w(N)}\frac{\log(N-n)}{N-n} \right) \\
\label{eq:hatTheta}
& +Lz(1)^{L-1}\frac{N}{\log(N)}\left(\sum_{n=0}^{N/2-1}{w(n)\frac{w(N-n)}{w(N)}} - z(1) \right) \ .
\end{align}
Since $\log(N-n)$ is decreasing for $n \in \{0,\ldots N/2-1\}$ we can find upper and lower bounds of the first term, by pulling out the logarithm, of the form
\begin{align*}
 \frac{\log(N/2-1)}{\log(N)}\left( \hat{F}_{L}+o_{N}(1)\right) \left(\sum_{n=0}^{N/2-1}w(n)\frac{w(N-n)}{w(N)}\frac{N}{N-n} \right)
\\
\leq \frac{N}{\log(N)}\left( \hat{F}_{L}+o_{N}(1)\right) \left(\sum_{n=0}^{N/2-1}w(n)\frac{w(N-n)}{w(N)}\frac{\log(N-n)}{N-n} \right) \\
\leq\left( \hat{F}_{L}+o_{N}(1)\right) \left(\sum_{n=0}^{N/2-1}w(n)\frac{w(N-n)}{w(N)}\frac{N}{N-n} \right) \ .
\end{align*}
Applying the same steps exactly as they appear in the proof of Lemma \ref{lemma: infinite mean} to the upper and lower bounds we have
\begin{equation}
\label{eq:ThetaLimit}
\lim_{N\to \infty}\hat{\Theta}_{L,N} = z(1)\hat{F_{L}} + \frac{1}{2} Lz(1)^{L-1}\hat{F}_{2} \ .
\end{equation}

To identify the limit of $\hat{\Xi}_{L,N}$ in \eqref{chuck2} we again follow the steps given in the proof of Lemma \ref{lemma: infinite mean}, which implies
\begin{equation}
\lim_{N\to \infty}\hat{\Xi}_{L,N} = \frac{1}{2}Lz(1)^{L-1}\hat{F}_{2} \ .
\end{equation}
Combining this with \eqref{eq:ThetaLimit} we have
\begin{equation}
\nonumber
\frac{N}{\log(N)}\left(\frac{Z_{L+1,N}}{w(N)} - (L+1)z(1)^{L}\right) \to \hat{F}_{L+1} = z(1)\hat{F}_{L} + L z(1)^{L-1}\hat{F}_{2} \,\textrm{ as }N \to \infty\ .
\end{equation}
From the recursion \eqref{bis2rec} it is obvious that $\hat{F}_{L}$ will have the same sign as $\hat{F}_{2}$, completing the proof of Lemma \ref{lemmabis2}.

\end{proof}

\section{On the sign of $F_{2}(b)$}
\label{appendixSign}

In this section, we compute the sign of $F_{2}(b)$ for $b \in (1,2)$, where
\begin{equation}
\nonumber
F_{2}(b) =2\sum_{i=1}^{\infty}\frac{1}{i!}\prod_{j=0}^{i-1}(j+b)\frac{2^{b-1-i}}{1-b+i} - 2\frac{2^{b-1}}{(b-1)} \ .
\end{equation}

Recall the definition of the Pochhammer symbol
\begin{equation}
\nonumber
(q)_{n} = \begin{cases}
1 &\textrm{ if } n=0 \\
(q)(q+1)\ldots (q+n-2)(q+n-1) &\textrm{ for }n\geq 1 
\end{cases} \ ,
\end{equation}
and the hypergeometric function
\begin{equation}
\nonumber
_{2}F_{1}(c,d,e,z)= \sum_{i=0}^{\infty}\frac{z^{i}}{i!}\frac{(c)_{i}(d)_{i}}{(e)_{i}} \ .
\end{equation}

We now show
\begin{equation}
\label{eq:f2bfirst}
F_{2}(b) = - \frac{2^{b}}{b-1} \,{}_{2}F_{1}\left(1-b,b,2-b,\frac{1}{2}\right) \ ,
\end{equation}
which in particular implies $F_{2}(3/2) = 0$ by evaluating the hypergeometric formula.
Factorising the term $2^{b}/(b-1)$ from $F_{2}(b)$ and rearranging terms inside the summation we have
\begin{align*}
\nonumber
F_{2}(b) &= \frac{2^{b}}{b-1}\left( \sum_{i=1}^{\infty}{\frac{1}{i!}\left(\frac{1}{2}\right)^{i}\prod_{j=0}^{i-1}(j+b) \frac{b-1}{1-b+i}} - 1 \right) \\
&= \frac{2^{b}}{b-1} \sum_{i=0}^{\infty}{\frac{1}{i!}\left(\frac{1}{2}\right)^{i}\prod_{j=0}^{i-1}(j+b) \frac{b-1}{1-b+i}} \ .
\end{align*}
Now use the following identities to simplify the terms inside the summation
\begin{equation}
\nonumber
\prod_{j=0}^{i-1}(j+b)=(b)_{i} \quad \textrm{ and }\quad (1-b+i) = (1-b)\frac{(2-b)_{i}}{(1-b)_{i}}\ ,
\end{equation}
which gives the required result \eqref{eq:f2bfirst}.

To complete the proof we use the following two relations for hypergeometric functions, Euler's transform \cite[15.3.3]{Abramowitz1965}
\begin{align*}
{}_{2}F_{1}(c,d,e,z) = (1-z)^{e-d-c} {}_{2}F_{1}(e-c,e-d,e,z) \ ,
\end{align*}	
and Gauss's second summation theorem \cite[15.1.24]{Abramowitz1965}
\begin{align*}
{}_{2}F_{1}\left(c,d,\frac{1}{2}(1+c+d),\frac{1}{2}\right) = \frac{\G\left(\frac{1}{2}\right)\G\left(\frac{1}{2}(1+c+d)\right)}{\G \left(\frac{1}{2}(1+c)\right)\G \left(\frac{1}{2}(1+d)\right)} \ .
\end{align*}
Therefore, 
\begin{align*}
F_{2}(b) &= -\frac{2^{b}}{b-1} {}_{2}F_{1}\left(1-b,b,2-b,\frac{1}{2}\right) \\
&= -\frac{2^{2b -1}}{b-1}  {}_{2}F_{1}\left(1,2-2b,2-b,\frac{1}{2}\right) \\
&= -\frac{\sqrt{\pi } 2^{2 b-1} \Gamma (2-b)}{(b-1) \Gamma
   \left(\frac{3}{2}-b\right)} \ .
\end{align*}

To calculate the sign of $F_{2}(b)$ we first note that the gamma function $\G(x)$ is positive for all $x>0$ and negative in the region $-1<x<0$, which implies
\begin{equation}
\nonumber
F_{2}(b) \begin{cases}
<0 \textrm{ for } b \in (1,3/2) \\
>0 \textrm{ for } b \in (3/2,2) 
\end{cases} \ .
\end{equation}

\section*{Acknowledgements}
This work was supported by the Engineering and Physical Sciences Research Council
(EPSRC), Grant No. EP/101358X/1. P.C. acknowledges
fellowship funding from the University of Warwick, Institute of Advanced
Study. The authors are grateful to Ellen Saada and Thierry Gobron for useful discussions and comments on the manuscript.

\bibliographystyle{unsrt}

\begin{bibdiv}
\begin{biblist}

\bib{spitzer70}{article}{
      author={Spitzer, F.},
       title={{Interaction of Markov processes}},
        date={1970},
     journal={Adv. Math.},
      volume={5},
       pages={246\ndash 290},
}

\bib{misanthrope}{article}{
      author={Cocozza-Thivent, C.},
       title={{Processus des misanthropes}},
        date={1985},
        ISSN={0044-3719},
     journal={Z. Wahrscheinlichkeitstheorie},
      volume={70},
      number={4},
       pages={509\ndash 523},
         url={http://link.springer.com/10.1007/BF00531864},
}

\bib{drouffe98}{article}{
      author={Drouffe, J.-M.},
      author={Godr{\`{e}}che, C.},
      author={Camia, F.},
       title={{A simple stochastic model for the dynamics of condensation}},
        date={1998},
        ISSN={0305-4470},
     journal={J. Phys. A-Math. Gen.},
      volume={31},
      number={1},
       pages={L19},
         url={http://stacks.iop.org/0305-4470/31/i=1/a=003},
}

\bib{Jeon2000}{article}{
      author={Jeon, I.},
      author={March, P.},
      author={Pittel, B.},
       title={{Size of the largest cluster under zero-range invariant
  measures}},
        date={2000},
     journal={Ann. Probab.},
      volume={28},
      number={3},
       pages={1162\ndash 1194},
         url={http://projecteuclid.org/euclid.aop/1019160330},
}

\bib{evans00}{article}{
      author={Evans, M.~R.},
       title={{Phase transitions in one-dimensional nonequilibrium systems}},
        date={2000},
        ISSN={0103-9733},
     journal={Braz. J. Phys.},
      volume={30},
      number={1},
       pages={42\ndash 57},
  url={http://www.scielo.br/scielo.php?script=sci{\_}arttext{\&}pid=S0103-97332000000100005{\&}lng=en{\&}nrm=iso{\&}tlng=en},
}

\bib{Godreche2012}{article}{
      author={Godr{\`{e}}che, C.},
      author={Luck, J.~M.},
       title={{Condensation in the inhomogeneous zero-range process: an
  interplay between interaction and diffusion disorder}},
        date={2012},
        ISSN={1742-5468},
     journal={J. Stat. Mech. Theor. Exp.},
      number={12},
       pages={P12013},
  url={http://iopscience.iop.org/1742-5468/2012/12/P12013/article?fromSearchPage=true},
}

\bib{Chleboun2013a}{article}{
      author={Chleboun, P.},
      author={Grosskinsky, S.},
       title={{Condensation in Stochastic Particle Systems with Stationary
  Product Measures}},
        date={2013},
        ISSN={0022-4715},
     journal={J. Stat. Phys.},
      volume={154},
      number={1-2},
       pages={432\ndash 465},
}

\bib{Evans2014}{article}{
      author={Evans, M.~R.},
      author={Waclaw, B.},
       title={{Condensation in stochastic mass transport models: beyond the
  zero-range process}},
        date={2014},
        ISSN={1751-8113},
     journal={J. Phys. A-Math. Theor.},
      volume={47},
      number={9},
       pages={095001},
         url={http://iopscience.iop.org/1751-8121/47/9/095001/article/},
}

\bib{ferrarietal07}{article}{
      author={Ferrari, P.~A.},
      author={Landim, C.},
      author={Sisko, V.},
       title={{Condensation for a Fixed Number of Independent Random
  Variables}},
        date={2007},
        ISSN={0022-4715},
     journal={J. Stat. Phys.},
      volume={128},
      number={5},
       pages={1153\ndash 1158},
         url={http://link.springer.com/10.1007/s10955-007-9356-3},
}

\bib{stefan}{article}{
      author={Grosskinsky, S.},
      author={Sch{\"{u}}tz, G.~M.},
      author={Spohn, H.},
       title={{Condensation in the Zero Range Process: Stationary and Dynamical
  Properties}},
        date={2003},
        ISSN={0022-4715},
     journal={J. Stat. Phys.},
      volume={113},
      number={3-4},
       pages={389\ndash 410},
         url={http://dx.doi.org/10.1023/A:1026008532442},
}

\bib{armendarizetal09}{article}{
      author={Armend{\'{a}}riz, I.},
      author={Loulakis, M.},
       title={{Thermodynamic limit for the invariant measures in supercritical
  zero range processes}},
        date={2008},
        ISSN={0178-8051},
     journal={Probab. Theory Relat. Fields},
      volume={145},
      number={1-2},
       pages={175\ndash 188},
         url={http://link.springer.com/10.1007/s00440-008-0165-7},
}

\bib{agl}{article}{
      author={Armend{\'{a}}riz, I.},
      author={Grosskinsky, S.},
      author={Loulakis, M.},
       title={{Zero range condensation at criticality}},
        date={2013},
     journal={Stoch. Proc. Appl.},
      volume={123},
      number={9},
       pages={3466\ndash 3496},
         url={http://wrap.warwick.ac.uk/35937/},
}

\bib{CharlesM.Goldie}{incollection}{
      author={Goldie, C.~M.},
      author={Kl{\"{u}}ppelberg, C.},
       title={{Subexponential Distributions}},
        date={1998},
   booktitle={A pract. guid. to heavy tails stat. tech. anal. heavy tailed
  distrib.},
   publisher={Birkh{\"{a}}user, Boston},
       pages={435\ndash 459},
         url={http://citeseerx.ist.psu.edu/viewdoc/summary?doi=10.1.1.28.1546},
}

\bib{Teugels1975}{article}{
      author={Teugels, J.~L.},
       title={{The Class of Subexponential Distributions}},
        date={1975},
        ISSN={0091-1798},
     journal={Ann. Probab.},
      volume={3},
       pages={1000\ndash 1011},
}

\bib{Pitman1980}{article}{
      author={Pitman, E. J.~G.},
       title={{Subexponential distribution functions}},
        date={1980},
     journal={J. Austral. Math. Soc. Ser. A},
      volume={29},
       pages={337\ndash 347},
         url={http://journals.cambridge.org/abstract{\_}S1446788700021340},
}

\bib{J.Chover1973}{article}{
      author={Chover, J.},
      author={Ney, P.},
      author={Wainger, S.~D.},
  title={Functions of probability measures},
  journal={Journal d'analyse math{\'e}matique},
  volume={26},
  number={1},
  pages={255--302},
  year={1973},
  publisher={Springer}
}

\bib{saada}{article}{
      author={Gobron, T.},
      author={Saada, E.},
       title={{Couplings, attractiveness and hydrodynamics for conservative
  particle systems}},
        date={2010},
        ISSN={0246-0203},
     journal={Ann. I. H. Poincare-PR},
      volume={46},
      number={4},
       pages={1132\ndash 1177},
         url={http://cat.inist.fr/?aModele=afficheN{\&}cpsidt=23661253},
}

\bib{krugetal96}{article}{
      author={Krug, J.},
      author={Ferrari, P.~A.},
       title={{Phase Transitions in Driven Diffusive Systems With Random
  Rates}},
        date={1996},
     journal={J. Phys. A-Math. Gen.},
      volume={29},
       pages={L465\ndash L471},
}

\bib{landim96}{article}{
      author={Landim, C.},
       title={{Hydrodynamical limit for space inhomogeneous one-dimensional
  totally asymmetric zero-range processes}},
        date={1996},
        ISSN={2168-894X},
     journal={Ann. Probab.},
      volume={24},
      number={2},
       pages={599\ndash 638},
         url={http://projecteuclid.org/euclid.aop/1039639356},
}

\bib{benjaminietal96}{article}{
      author={Benjamini, I.},
      author={Ferrari, P.~A.},
      author={Landim, C.},
       title={{Asymmetric conservative processes with random rates}},
        date={1996},
        ISSN={0304-4149},
     journal={Stoch. Proc. Appl.},
      volume={61},
      number={2},
       pages={181\ndash 204},
  url={http://www.sciencedirect.com/science/article/pii/0304414995000771},
}

\bib{andjeletal00}{article}{
      author={Andjel, E.~D.},
      author={Ferrari, P.~A.},
      author={Guiol, H.},
      author={Landim, C.},
       title={{Convergence to the maximal invariant measure for a zero-range
  process with random rates}},
        date={2000},
        ISSN={0304-4149},
     journal={Stoch. Proc. Appl.},
      volume={90},
      number={1},
       pages={67\ndash 81},
  url={http://www.sciencedirect.com/science/article/pii/S0304414900000375},
}

\bib{ferrarisisko}{article}{
      author={Ferrari, P.~A.},
      author={Sisko, V.},
       title={{Escape of mass in zero-range processes with random rates}},
        date={2007},
     journal={IMS Lect. notes, Asymptotics Part. Process. Inverse Probl.},
      volume={55},
       pages={108\ndash 120},
}

\bib{Bahadoran2014}{article}{
      author={Bahadoran, C.},
      author={Mountford, T.},
      author={Ravishankar, K.},
      author={Saada, E.},
       title={{Supercritical behavior of asymmetric zero-range process with
  sitewise disorder}},
        date={2014},
       pages={48}
}

\bib{Bahadoran2015}{article}{
      author={Bahadoran, C.},
      author={Mountford, T.},
      author={Ravishankar, K.},
      author={Saada, E.},
       title={{Supercriticality conditions for asymmetric zero-range process
  with sitewise disorder}},
        date={2015},
        ISSN={0103-0752},
     journal={Braz. J. Probab. Stat.},
      volume={29},
      number={2},
       pages={313\ndash 335},
  url={http://www.researchgate.net/publication/268451949{\_}Supercriticality{\_}conditions{\_}for{\_}asymmetric{\_}zero-range{\_}process{\_}with{\_}sitewise{\_}disorder},
}

\bib{chlebounetal10}{article}{
      author={Chleboun, P.},
      author={Grosskinsky, S.},
       title={{Finite Size Effects and Metastability in Zero-Range
  Condensation}},
        date={2010},
        ISSN={0022-4715},
     journal={J. Stat. Phys.},
      volume={140},
      number={5},
       pages={846\ndash 872},
         url={http://link.springer.com/10.1007/s10955-010-0017-6},
}

\bib{Stamatakis2014}{article}{
      author={Stamatakis, M.~G.},
       title={{Hydrodynamic Limit of Mean Zero Condensing Zero Range Processes
  with Sub-Critical Initial Profiles}},
        date={2014},
        ISSN={0022-4715},
     journal={J. Stat. Phys.},
      volume={158},
      number={1},
       pages={87\ndash 104},
         url={http://link.springer.com/10.1007/s10955-014-1113-9},
}

\bib{Fajfrova2014}{article}{
      author={Fajfrova, L.},
      author={Gobron, T.},
      author={Saada, E.},
       title={{Invariant measures for mass migration processes}},
       journal={arXiv:1507.00778 [math.PR]},
}

\bib{Waclaw2007}{article}{
      author={Waclaw, B.},
      author={Bogacz, L.},
      author={Burda, Z.},
      author={Janke, W.},
       title={{Condensation in zero-range processes on inhomogeneous
  networks}},
        date={2007},
     journal={Phys. Rev. E},
      volume={76},
      number={4},
       pages={046114},
}
\bib{foss2011introduction}{book}{
  title={An introduction to heavy-tailed and subexponential distributions},
  author={Foss, S.},
  author={Korshunov, D.}
  author={Zachary, S.}
  year={2011},
  publisher={Springer}
}
\bib{Klppelberg1989}{article}{
      author={Kl{\"{u}}ppelberg, C.},
       title={{Subexponential distributions and characterizations of related
  classes}},
        date={1989},
        ISSN={0178-8051},
     journal={Probab. Theory Relat. Fields},
      volume={82},
      number={2},
       pages={259\ndash 269},
         url={http://link.springer.com/10.1007/BF00354763},
}

\bib{Baltrunas2004}{article}{
      author={Baltrunas, A.},
      author={Kl{\"{u}}ppelberg, C.},
       title={{Subexponential Distributions - Large Deviations with
  Applications to Insurance and Queueing Models}},
        date={2004},
        ISSN={1369-1473},
     journal={Aust N Z J Stat},
      volume={46},
      number={1},
       pages={145\ndash 154},
         url={http://doi.wiley.com/10.1111/j.1467-842X.2004.00320.x},
}

\bib{Chleboun2015}{article}{
      author={Chleboun, P.},
      author={Grosskinsky, S.},
       title={{A dynamical transition and metastability in a size-dependent
  zero-range process}},
        date={2015},
        ISSN={1751-8113},
     journal={J. Phys. A-Math. Theor.},
      volume={48},
      number={5},
       pages={055001},
         url={http://stacks.iop.org/1751-8121/48/i=5/a=055001},
}

\bib{Chistyakov1964}{article}{
      author={Chistyakov, V.~P.},
       title={{A Theorem on Sums of Independent Positive Random Variables and
  Its Applications to Branching Random Processes}},
        date={1964},
        ISSN={0040-585X},
     journal={Theory Probab. Appl.},
      volume={9},
      number={4},
       pages={640\ndash 648},
         url={http://epubs.siam.org/doi/abs/10.1137/1109088},
}

\bib{Embrechts1980}{article}{
      author={Embrechts, P.},
      author={Goldie, CM},
       title={{On closure and factorization properties of subexponential and
  related distributions}},
        date={1980},
     journal={J. Austral. Math. Soc. Ser. A},
      volume={29},
      number={02},
       pages={243\ndash 256},
         url={http://journals.cambridge.org/abstract{\_}S1446788700021224},
}

\bib{Denisov2008}{article}{
      author={Denisov, D.},
      author={Dieker, A.~B.},
      author={Shneer, V.},
       title={{Large deviations for random walks under subexponentiality: The
  big-jump domain}},
        date={2008},
        ISSN={2168-894X},
     journal={Ann. Probab.},
      volume={36},
      number={5},
       pages={1946\ndash 1991},
         url={http://projecteuclid.org/euclid.aop/1221138771},
}

\bib{Armendariz2011}{article}{
      author={Armend{\'{a}}riz, I.},
      author={Loulakis, M.},
       title={{Conditional distribution of heavy tailed random variables on
  large deviations of their sum}},
        date={2011},
        ISSN={03044149},
     journal={Stoch. Proc. Appl.},
      volume={121},
      number={5},
       pages={1138\ndash 1147},
  url={http://www.sciencedirect.com/science/article/pii/S0304414911000238},
}
\bib{embrechts1982convolution}{article}{
  author={Embrechts, P.},
  author={Goldie, C. M.},
  title={On convolution tails},
  journal={Stoch. Proc. Appl.},
  volume={13},
  number={3},
  pages={263--278},
  year={1982},
  publisher={Elsevier}
}
\bib{peresbook}{book}{
      author={Levin, D.~A.},
      author={Peres, Y.},
      author={Wilmer, E.~L.},
       title={{Markov Chains and Mixing Times}},
   publisher={American Mathematical Society},
        date={2009},
}

\bib{Grimmett2001}{book}{
      author={Grimmett, G.},
      author={Stirzaker, D.},
       title={{Probability and Random Processes}},
   publisher={OUP Oxford},
        date={2001},
        ISBN={0198572220},
  url={http://www.amazon.co.uk/Probability-Random-Processes-Geoffrey-Grimmett/dp/0198572220},
}

\bib{giardinaetal09}{article}{
      author={Giardin{\`{a}}, C.},
      author={Kurchan, J.},
      author={Redig, F.},
      author={Vafayi, K.},
       title={{Duality and Hidden Symmetries in Interacting Particle Systems}},
        date={2009},
        ISSN={0022-4715},
     journal={J. Stat. Phys.},
      volume={135},
      number={1},
       pages={25\ndash 55},
         url={http://link.springer.com/10.1007/s10955-009-9716-2},
}

\bib{giardinaetal10}{article}{
      author={Giardin{\`{a}}, C.},
      author={Redig, F.},
      author={Vafayi, K.},
       title={{Correlation Inequalities for Interacting Particle Systems with
  Duality}},
        date={2010},
        ISSN={0022-4715},
     journal={J. Stat. Phys.},
      volume={141},
      number={2},
       pages={242\ndash 263},
         url={http://link.springer.com/10.1007/s10955-010-0055-0},
}

\bib{waclawetal11}{article}{
      author={Waclaw, B.},
      author={Evans, M.~R.},
       title={{Explosive Condensation in a Mass Transport Model}},
        date={2012},
        ISSN={0031-9007},
     journal={Phys. Rev. Lett.},
      volume={108},
      number={7},
       pages={70601},
         url={http://link.aps.org/doi/10.1103/PhysRevLett.108.070601},
}

\bib{Evans2004}{article}{
      author={Evans, M.~R.},
      author={Majumdar, S.~N.},
      author={Zia, R. K.~P.},
       title={{Factorized steady states in mass transport models}},
        date={2004},
        ISSN={0305-4470},
     journal={J. Phys. A-Math. Gen.},
      volume={37},
      number={25},
       pages={L275\ndash L280},
         url={http://stacks.iop.org/0305-4470/37/i=25/a=L02},
}

\bib{Majumdarb}{article}{
      author={Majumdar, S.~N.},
      author={Krishnamurthy, S.},
      author={Barma, M.},
       title={{Nonequilibrium Phase Transition in a Model of Diffusion,
  Aggregation, and Fragmentation}},
        ISSN={1572-9613},
     journal={J. Stat. Phys.},
      volume={99},
      number={1-2},
       pages={1\ndash 29},
         url={http://link.springer.com/article/10.1023/A:1018632005018},
}

\bib{Majumdar1998}{article}{
      author={Majumdar, S.~N.},
      author={Krishnamurthy, S.},
      author={Barma, M.},
       title={{Nonequilibrium Phase Transitions in Models of Aggregation,
  Adsorption, and Dissociation}},
        date={1998},
        ISSN={0031-9007},
     journal={Phys. Rev. Lett.},
      volume={81},
      number={17},
       pages={3691\ndash 3694},
         url={http://link.aps.org/doi/10.1103/PhysRevLett.81.3691},
}

\bib{Rajesh2001}{article}{
      author={Rajesh, R.},
      author={Majumdar, S.~N.},
       title={{Exact phase diagram of a model with aggregation and chipping}},
        date={2001},
        ISSN={1063-651X},
     journal={Phys. Rev. E},
      volume={63},
      number={3},
       pages={036114},
         url={http://link.aps.org/doi/10.1103/PhysRevE.63.036114
  http://www.ncbi.nlm.nih.gov/pubmed/11308716},
}

\bib{touchette2009}{article}{
      author={Touchette, H.},
       title={{The large deviation approach to statistical mechanics}},
        date={2009},
        ISSN={03701573},
     journal={Phys. Rep.},
      volume={478},
      number={1-3},
       pages={1\ndash 69},
}

\bib{georgii1979canonical}{book}{
  title={Canonical Gibbs Measures},
  author={Georgii, Hans-Otto},
  year={1979},
  publisher={Springer}
}

\bib{Abramowitz1965}{book}{
      author={Abramowitz, M.},
      author={Stegun, I.~A.},
       title={{Handbook of Mathematical Functions}},
   publisher={Dover, New York},
        date={1965},
      number={3},
        ISBN={0486612724},
         url={http://www.ncbi.nlm.nih.gov/pubmed/22118217},
}

\end{biblist}
\end{bibdiv}

\end{document}